\setlist{nolistsep}
\DeclareMathOperator*{\argmax}{arg\,max}
\DeclareMathOperator*{\argmin}{arg\,min}
\newtheorem{definition}{Definition}[section]
\newtheorem{lemma}[definition]{Lemma}
\newtheorem{exam}[definition]{Example}
\newtheorem{proposition}{Proposition}[section]
\newtheorem{rem}{Remark}[section]
\newenvironment{example}{\begin{exam} \rm{ }}{\qed \end{exam}}
\newcommand{\R}{\mathbb{R}}
\newcommand{\var}{\textrm{\rm var}}
\newcommand{\cov}{\textrm{\rm cov}}
\newcommand{\re}{ \textrm{\rm Re} }
\newcommand{\nop}[1]{{#1}}
\newcommand{\levy}{{L\'evy}}
\newcommand{\Tau}{{\mathcal{T}}}
\newcommand{\disteq}{\stackrel{\mathrm{d}}{=}}
\newcommand{\NTS}{\textrm{\rm NTS}}
\newcommand{\stdNTS}{\textrm{\rm stdNTS}}
\newcommand{\tr}{{\texttt{T}}}
\newcommand{\diag}{\textrm{\rm diag}}
\newcommand{\VaR}{\textrm{\rm VaR}}
\newcommand{\CVaR}{\textrm{\rm CVaR}}
\title{Portfolio Optimization on the Dispersion Risk and the Asymmetric Tail Risk}
\author{Young Shin Kim\footnote{College of Business, Stony Brook University, New York, USA (aaron.kim@stonybrook.edu). The author gratefully acknowledges the support of GlimmAnalytics LLC and Juro Instruments Co., Ltd.
The author is grateful to Minseob Kim, who reviewed this paper and corrected editorial errors. Also, all remaining errors are entirely my own.} }
\providecommand{\keywords}[1]{\textbf{\textit{Key words:}} #1}
\begin{document}
\maketitle
 
\begin{abstract}
In this paper, we propose a market model with returns assumed to follow a multivariate normal tempered stable distribution defined by a mixture of the multivariate normal distribution and the tempered stable subordinator. This distribution is able to capture two
stylized facts: fat-tails and asymmetry, that have been empirically observed for asset return distributions. On the new market model, we discuss a new portfolio optimization method, which is an extension of Markowitz’s mean-variance optimization. The new optimization method considers not only reward and dispersion but also asymmetry. The efficient frontier is extended to a curved surface on three-dimensional space of reward, dispersion, and asymmetry. We also propose a new performance measure which is an extension of the Sharpe Ratio. Moreover, we derive closed-form solutions for two important measures used by portfolio managers in portfolio construction: the marginal Value-at-Risk (VaR) and the marginal Conditional VaR (CVaR). We illustrate the proposed model using stocks comprising the Dow Jones Industrial Average. First, perform the new portfolio optimization and then demonstrating how the marginal VaR and marginal CVaR can be used for portfolio optimization under the model. Based on the empirical evidence presented in this paper, our framework
offers realistic portfolio optimization and tractable methods for portfolio risk management.\\
\keywords{
Portfolio Optimization, Asymmetry Risk Measure, Normal Tempered Stable Distribution, Marginal Contribution, Portfolio Budgeting, Value at Risk, Conditional Value at Risk
}%
\end{abstract}

\baselineskip=24pt

\doublespacing
\section{Introduction}

It admits no doubt that the mean-variance model formulated by Harry Markowitz (\citeyear{Markowitz:1952}) is a major contribution to the portfolio theory in finance. Although some assumptions of the model were challenged theoretically and empirically, it is impossible to exaggerate the importance of Markowitz's portfolio optimization theory. The optimization model has been applied to  asset allocation, portfolio selection, asset-liability management, and risk management. There are many alternative models by relaxation of the assumption of the model. In this paper, we discuss two assumptions of the model and how we can improve them.

The first is the assumption that asset returns follow a Gaussian distribution, which has been dominateing financial theories. However, the Gaussian assumption was empirically rejected in literature including \cite{Mandelbrot:1963a,Mandelbrot:1963b} and \cite{Fama:1963}, and empirical evidence says that asset returns exhibit fat tails and asymmetry. Therefore, non-Gaussian models, which can describe the stylized facts about asset return data, have been suggested for the underlying model of the portfolio optimization. The subordinated Gaussian distribution is popularly used to construct a multivariate market model with fat-tails and asymmetry. This distribution of the model is defined by taking the multivariate normal distribution and changing the variance to a strictly positive random vector, which is referred to as the subordinator. Examples of the distribution used in the portfolio theory are $\alpha$-stable subordinated Gaussian distribution (\cite{RachevMittnik:2000}), the inverse Gaussian subordinated Gaussian distribution (\cite{Oigaard:2005}, \cite{Aas:2006}, \cite{EberleinMadan:2010}), the inverse Gamma subordinated Gaussian distribution (\cite{StoyanovRachevFabozzi:2013}) and the tempered stable subordinated Gaussian distribution (\cite{BarndorffNielsenShephard:2001} and \cite{BarndorffNielsenLevendorskii:2001}).

The second assumption is the use of the portfolio variance as a risk measure. 
Since asset return distribution does not follow Gaussian distribution but exhibits fat-tails and asymmetry, a risk measure would better be able to assess  not only dispersion but also asymmetry. Portfolio optimization with asymmetry risk measure has been discussed in \cite{King:1993} and \cite{Dahlquist_et_al:2017}. Since the asymmetric risk indirectly measured by the coherent risk measures such as the conditional value at risk (CVaR) by \cite{Pflug:2000} and \cite{RockafellarUryasev:2000,RockafellarUryasev:2002}, the portfolio optimization with coherent risk measure has been studied in \cite{RachevStoyanovFabozzi:2007}, \cite{Mansini_et_al:2007}, \cite{RachevKimBianchiFabozzi:2011a}, and \cite{Kim_et_al:2012}.

In this paper, we propose a non-Gaussian market model that returns are assumed to follow the normal tempered stable (NTS) distribution. The NTS distribution is the tempered stable subordinated Gaussian distribution. It is asymmetric and has exponential tails that are fatter than Gaussian tails and thinner than the power tails of $\alpha$-stable distributions. For that reason, it can describe the asymmetric and fat-tail properties of the stock return distribution. Since it has exponential tails, it has finite exponential moments and finite integer moments for all orders such as mean, variance, skewness, and kurtosis. By standardization, we obtain a NTS distribution having zero mean and unit variance. That can be used as the innovation distribution of a time series model including the ARMA-GARCH model. Since the distribution is infinitely divisible, we can generate a continuous-time \levy~process on the NTS distribution. For this good properties, the distribution was popularly used in finance; portfolio optimization(\cite{EberleinMadan:2010}, \cite{Kim_et_al:2012}, \cite{Anad_et_al:2016}), risk management (\cite{Kim_et_al:2009b:AVaR}, \cite{Anad_et_al:2017}, \cite{KurosakiKim:2018}), option pricing (\cite{Boyarchenko_Levendorskii:2000}, \cite{RachevKimBianchiFabozzi:2011a}, \cite{EberleinGlau:2014}, \cite{KIM2015512}), term structure of interest rate model (\cite{Eberlein_Ozkan:2005}), and credit risk management(\cite{Eberlein_et_al:2012}, \cite{KimKim:2018}).

An important measure derived from portfolio optimization is the marginal risk contribution. The marginal risk contribution help managers  to make portfolio rebalancing decisions. This risk measure is the rate of change in risk, whether it is variance, Value-at-Risk(VaR), or CVaR, with respect to a small percentage change in the size of a portfolio allocation weight. Mathematically it is defined by the first derivative of the risk measure with respect to the marginal weight. Because of the importance of this measure in portfolio decisions, a closed-form solution for this measure is needed. The general form of marginal risk contributions for the VaR and CVaR are provided in \cite{GourierouxaEtAl:2000}. Moreover, the closed-form of the marginal risk contributions for VaR and CVaR are discussed under the skewed-$t$ distributed market model in \cite{StoyanovRachevFabozzi:2013}, under the NTS market model in \cite{Kim_et_al:2012}, and under the Generalized Hyperbolic distribution model in \cite{ShiKim:2015}.

The contributions of this paper are as follows. First, we construct a market model using the multivariate NTS distributed portfolio returns. Different from the NTS market model in \cite{Kim_et_al:2012}, the market model in this paper is constructed by only the standard NTS distribution which is a subclass of NTS distribution having zero mean and unit variance. Using the new NTS market model, we discuss the portfolio optimization theory considering dispersion risk and asymmetry risk. As the Markowitz model, the standard deviation is used for the dispersion risk measure. In addition to the dispersion risk measure, the asymmetric tail risk measure is proposed in order to capture the asymmetry risk in portfolio optimization. It is a weighted mean of asymmetry parameters of the standard NTS distribution. Using those two risk measures, we find an efficient frontier surface which is an extension of Markowitz's efficient frontier \footnote{The similar study has been presented in \cite{ShiKim:2015} under the Generalized Hyperbolic distribution model.}. 
Moreover, a new performance measure of a portfolio on an efficient frontier surface is presented. The new performance measure is an extension of the Sharpe ratio (\cite{Sharpe:1966,Sharpe:1994}). Finally, we provide closed-form solutions of the marginal risk contribution for VaR and CVaR under the NTS market model. The marginal VaR and marginal CVaR formula in this paper are simpler than the solutions presented in \cite{Kim_et_al:2012}, and hence we can discuss the iterative risk budgeting which was not discussed in \cite{Kim_et_al:2012}. 

Empirical illustrations are provided for each topic in the paper with performance tests. Data used in the empirical illustrations are historical daily returns of major 30 stocks in the U.S. market. We draw the efficient frontier surface based on the estimated parameters of the NTS market model. The performance measure maximization strategy is also exhibited and it is compared to the traditional Sharpe ratio maximization strategy by backtesting. We calculate Marginal VaR \& CVaR and perform the risk budgeting using calculated marginal VaR \& CVaR. 

The remainder of this paper is organized as follows. 
The NTS market model is presented in Section 2. The portfolio optimization with dispersion and asymmetry risk measures is discussed in Section 3 under the NTS market model. The new performance measure of the portfolio is presented in Section 4. We provide closed-form solutions for the marginal VaR and CVaR under the NTS market model in Section 5, where we also discuss portfolio budgeting
using the marginal VaR and the marginal CVaR. Finally, Section 6 concludes. Proofs and mathematical details are presented in the appendix.

\section{\label{sec:NTS}Normal Tempered Stable Market Model}
Let $N$ be a finite positive integer and $X=(X_1, X_2, \cdots, X_N)^{\tr}$ be a multivariate random variable given by
\[
X = \beta(\Tau-1) + \textup{diag}(\gamma) \varepsilon \sqrt{\Tau} ,
\]
where 
\begin{itemize}
\item $\Tau$ is the tempered stable subordinator\footnote{The tempered subordinator is defined by the characteristic function \eqref{eq:ChF.TSsubordProcess} in the Appendix.} with parameters $(\alpha,\theta)$, and is independent of $\varepsilon_n$ for all $n=1,2,\cdots, N$.
\item $\beta = (\beta_1, \beta_2, \cdots, \beta_N)^{\tr}\in\R^N$ with $|\beta_n|<\sqrt\frac{2\theta}{2-\alpha}$ for all $n\in\{1,2,\cdots, N\}$.
\item $\gamma = (\gamma_1, \gamma_2, \cdots, \gamma_N)^{\tr}\in\R_+^N$ with $\gamma_n = \sqrt{1-\beta_n^2\left(\frac{2-\alpha}{2\theta}\right) }$  for all $n\in\{1,2,\cdots, N\}$ and $\R_+=[0,\infty)$.
\item $\varepsilon = (\varepsilon_1, \varepsilon_2, \cdots, \varepsilon_N)^{\tr}$ is $N$-dimensional standard normal distribution with a covariance matrix $\Sigma$. That is, $\varepsilon_n\sim \Phi(0,1)$ for $n\in\{1,2,\cdots, N\}$ and $(k,l)$-th element of $\Sigma$ is given by $\rho_{k,l}=\cov(\varepsilon_k,\varepsilon_l)$ for $k,l\in\{1,2,\cdots,N\}$. Note that $\rho_{k,k}=1$.
\end{itemize}
In this case, $X$ is referred to as the \textit{$N$-dimensional standard NTS random variable} with parameters $(\alpha$, $\theta$, $\beta$, $\Sigma)$ and we denote it by $X\sim \textup{stdNTS}_N(\alpha$, $\theta$, $\beta$, $\Sigma)$ (See more details in Appendix.).

Consider a portfolio having $N$ assets. The return of the assets in the portfolio is given by a random vector $R=(R_1$, $R_2$, $\cdots$, $R_N)^\tr$. We suppose that the return $R$ follows 
\begin{equation} \label{eq:NTS Market}
R = \mu + \diag(\sigma) X 
\end{equation}
where  $\mu = (\mu_1, \mu_2, \cdots, \mu_N)^\tr\in\R^N$, $\sigma = (\sigma_1, \sigma_2, \cdots, \sigma_N)^\tr\in\R_+^N$ and $X\sim \stdNTS_N(\alpha, \theta, \beta, \Sigma)$. Then we have $E[R_n]=\mu_n$ and $\var(R_n) = \sigma_n^2$ for all $n\in\{1,2, \cdots, N\}$. This market model is referred to as the \textit{NTS market model}.
Let $w=(w_1, w_2, \cdots, w_N)^\tr\in I^N$ with $I=[0,1]$ be the capital allocation weight vector\footnote{In this paper, we consider the long only portfolio.}. Then the portfolio return for $w$ is equal to $R_P(w)=w^\tr R$. The distribution of $R_P(w)$ is presented in the following proposition whose proof is in Appendix.
\begin{proposition}\label{pro:mu+sigma stdNTS}
Let $\mu \in\R^N$,
$\sigma \in \R_+^N$, and $X \sim \textup{stdNTS}_N(\alpha, \theta, \beta, \Sigma)$. Suppose a $N$-dimensional random variable $R$ is given by \eqref{eq:NTS Market} and $w\in I^N$ with $I=[0,1]$. Then
\begin{equation} \label{eq:NTS Portfolio Return}
R_P(w) \disteq \bar\mu(w) + \bar\sigma(w) \Xi ~~~\text{ for }~~~ \Xi\sim \textup{stdNTS}_1(\alpha, \theta, \bar\beta(w), 1),
\end{equation}
where 
\[
\bar\mu(w) = w^\tr\mu,~~~ \bar\sigma(w) = \sqrt{w^\tr \Sigma_R w},
~~~\bar\beta(w)=\frac{w^\tr\diag(\sigma)\beta}{\bar\sigma(w)}
\]
and $\Sigma_R$ is the covariance matrix of $R$.
\end{proposition}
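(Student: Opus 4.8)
The plan is to substitute the subordinated-Gaussian representation of $X$ into $R_P(w)=w^\tr R$ and collapse the resulting expression into the one-dimensional mixture form that \emph{defines} $\stdNTS_1$. First I would expand, using $X=\beta(\Tau-1)+\diag(\gamma)\varepsilon\sqrt{\Tau}$,
\[
R_P(w)=w^\tr\mu + w^\tr\diag(\sigma)X = w^\tr\mu + b\,(\Tau-1) + \sqrt{\Tau}\,Y,
\]
where $b:=w^\tr\diag(\sigma)\beta$ and $Y:=\sum_{n=1}^N w_n\sigma_n\gamma_n\varepsilon_n$. Since $Y$ is a fixed linear functional of the jointly Gaussian vector $\varepsilon$, it is $\Phi(0,s^2)$ with $s^2:=\sum_{k,l}w_kw_l\sigma_k\sigma_l\gamma_k\gamma_l\rho_{k,l}$, and it is independent of $\Tau$. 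Hence the pair $(\Tau,Y)$ has the same law as $(\Tau,sZ)$ for a standard normal $Z$ independent of $\Tau$, and applying the measurable map $(\tau,y)\mapsto w^\tr\mu+b(\tau-1)+\sqrt{\tau}\,y$ to both sides gives
\[
R_P(w) \disteq w^\tr\mu + b(\Tau-1) + s\sqrt{\Tau}\,Z .
\]

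Next I would match this against $\bar\mu(w)+\bar\sigma(w)\Xi$ with $\Xi=\bar\beta(w)(\Tau-1)+\bar\gamma(w)\sqrt{\Tau}\,Z$ and $\bar\gamma(w)=\sqrt{1-\bar\beta(w)^2(2-\alpha)/(2\theta)}$, which is exactly the definition of $\Xi\sim\stdNTS_1(\alpha,\theta,\bar\beta(w),1)$. Since $\bar\mu(w)=w^\tr\mu$, equating the coefficients of $(\Tau-1)$ forces $\bar\sigma(w)\bar\beta(w)=b$ (consistent with the claimed $\bar\beta(w)=b/\bar\sigma(w)$), and equating the coefficients of $\sqrt{\Tau}\,Z$ forces $\bar\sigma(w)^2\bar\gamma(w)^2=s^2$, i.e. $\bar\sigma(w)^2=s^2+\tfrac{2-\alpha}{2\theta}b^2$.

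It then only remains to verify that this last quantity is $w^\tr\Sigma_R w$. Using $E[\Tau]=1$ and $\var(\Tau)=\tfrac{2-\alpha}{2\theta}$ (read off from the characteristic function of the tempered stable subordinator in the Appendix; equivalently, this is the value that makes $\var(X_n)=1$ given the constraint on $\gamma_n$) together with the independence of $\Tau$ and $\varepsilon$, a direct second-moment computation gives $\cov(R_k,R_l)=\sigma_k\sigma_l\big(\tfrac{2-\alpha}{2\theta}\beta_k\beta_l+\gamma_k\gamma_l\rho_{k,l}\big)$ for all $k,l$ (the diagonal returning $\sigma_k^2$ by the very definition of $\gamma_k$). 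Summing against $w$ yields $w^\tr\Sigma_R w=\tfrac{2-\alpha}{2\theta}\big(w^\tr\diag(\sigma)\beta\big)^2+s^2=\tfrac{2-\alpha}{2\theta}b^2+s^2$, which closes the argument. In particular $\bar\sigma(w)^2>\tfrac{2-\alpha}{2\theta}b^2$ whenever $s^2>0$, so $|\bar\beta(w)|<\sqrt{2\theta/(2-\alpha)}$ and $\bar\gamma(w)\in(0,1]$ is well defined.

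I do not anticipate a genuine obstacle here; the computation is routine once $\var(\Tau)$ is in hand. The one point to handle with care is to obtain the distributional identity from the joint law of $(\Tau,Y)$ — or, equivalently, by conditioning on $\Tau$ and comparing the resulting Gaussian characteristic functions — rather than by merely matching a few moments, so that the conclusion is genuinely an equality in law.
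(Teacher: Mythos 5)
Your proposal is correct and follows essentially the same route as the paper: reduce $w^\tr R$ to a one-dimensional NTS via the subordinated-Gaussian representation (the paper's Lemma on linear combinations), rescale to a standard NTS by matching $\bar\sigma\bar\beta(w)=w^\tr\diag(\sigma)\beta$ and $\bar\sigma^2\bar\gamma^2=s^2$ (the paper's reparametrization lemma, which it proves by characteristic functions where you match coefficients in the mixture representation directly), and finally identify $\bar\sigma(w)^2=s^2+\tfrac{2-\alpha}{2\theta}\bigl(w^\tr\diag(\sigma)\beta\bigr)^2$ with $w^\tr\Sigma_R w$ through the covariance formula $\cov(X_k,X_l)=\rho_{k,l}\gamma_k\gamma_l+\beta_k\beta_l\tfrac{2-\alpha}{2\theta}$. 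The only cosmetic difference is that you inline the two auxiliary lemmas and use $\var(\Tau)=\tfrac{2-\alpha}{2\theta}$ explicitly, which the paper encodes in that same covariance identity.
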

According to Proposition \ref{pro:mu+sigma stdNTS}, we need only $\Sigma_R$, which is covariance matrix of $R$, and we do not need to know $\Sigma$, which is the covariance matrix for $\epsilon$, when we study the portfolio return of the NTS market model.
\\
\begin{table}[!ht]
\centering
\begin{tabular}{cc|cc}
\hline
Company & Symbol & Company & Symbol \\
\hline
\hline
3M	&	 MMM	&
American Express	&	 AXP	\\
Apple Inc.	&	AAPL	 &
Boeing	&	 BA	\\
Caterpillar Inc.	&	 CAT	&
Chevron Corporation	&	 CVX	\\
Cisco Systems	&	CSCO	&
The Coca-Cola Company	&	 KO	\\
DuPont de Nemours Inc.	 &	 DD	&
Exxon Mobil	&	 XOM	\\
Goldman Sachs	&	 GS	&
The Home Depot	&	 HD	\\
IBM	&	 IBM	&
Intel	&	INTC	\\
Johnson \& Johnson	&	 JNJ	&
JPMorgan Chase	&	 JPM	\\
McDonald's	&	 MCD	&
Merck \& Co.	&	 MRK	\\
Microsoft	&	MSFT	&
Nike	 &	 NKE	\\
Pfizer	&	 PFE	&
Procter \& Gamble	&	 PG	\\
Raytheon Technologies&	 RTX	&
The Travelers Companies	&	 TRV	\\
United Health Group	&	 UNH	&
Verizon	&	 VZ	\\
Visa Inc.	&	 V	&
Walmart	&	 WMT	\\
Walgreens Boots Alliance	&	WBA	&
The Walt Disney Company	&	 DIS	\\
\hline
\end{tabular}
\caption{\label{table:DJIA Members}Companies and symbols of 30 Stocks. They are selected based on the components for DJIA index, since April 6, 2020, but Dow Inc.(DOW) in the components is replaced by DuPont de Nemours Inc.(DD).}
\end{table}
~\\
\textit{Empirical Illustration}

We fit the NTS market model to 30 major stocks in the U.S. stock market. The 30 stocks are selected based on the components for Dow Johns Industrial Average (DJIA) index, since April 6, 2020. However, Dow Inc.(DOW) in the components is replaced by DuPont de Nemours Inc.(DD), since DOW does not have enough history. Table \ref{table:DJIA Members} exhibits the names of companies and symbols of those 30 Stocks.
For each stock, we calculate the sample mean and sample standard deviation of daily log-returns, and then fit the stdNTS parameters using extract standardized residual. In this parameter fitting, the curve fit method is used between the cumulative distribution function (CDF) of the stdNTS distribution\footnote{CDF of stdNTS distribution is obtained by the fast Fourier transform method by \cite{Gil-Pelaez:1951}. } and the empirical CDF obtained by the kernel density estimation. In order to find $\alpha$ and $\theta$, we use the DJIA index data and find other parameters as the following two-step method:
\begin{itemize}
\item[] \textbf{Step 1} Find $(\alpha_{DJ}, \theta_{DJ}, \beta_{DJ})$ using the curve fit method between the empirical CDF and stdNTS CDF for the standardized return data of DJIA index.
\item[] \textbf{Step 2} Fix $\alpha = \alpha_{DJ}$ and $\theta = \theta_{DJ}$. Find $\beta_n$ using the curve fit method again for each $n$-th stock returns $n\in\{1,2,\cdots, 30\}$ with fixed $\alpha$ and $\theta$.
\end{itemize}

\begin{table}
\centering
\begin{tabular}{cccccc}
\hline
Symbol & $\mu_n (\%)$ & $\sigma_n (\%)$ & $\beta_n (\%)$  & KS & $p$-value$(\%)$\\
\hline
\hline
AAPL & $0.13 $ & $ 1.56 $ & $ -2.55 $ & $ 0.022 $ & $ 83.74 $ \\ 
  AXP & $0.073 $ & $ 1.18 $ & $ -3.90 $ & $ 0.036 $ & $ 29.01 $ \\ 
  BA & $0.11 $ & $ 1.68 $ & $ 0.50 $ & $ 0.038 $ & $ 22.08 $ \\ 
  CAT & $0.071 $ & $ 1.71 $ & $ -2.93 $ & $ 0.035 $ & $ 29.77 $ \\ 
  CSCO & $0.071 $ & $ 1.44 $ & $ -8.56 $ & $ 0.015 $ & $ 99.57 $ \\ 
  CVX & $0.018 $ & $ 1.22 $ & $ -4.65 $ & $ 0.032 $ & $ 41.73 $ \\ 
  DD & $-0.020 $ & $ 1.70 $ & $ 0.32 $ & $ 0.034 $ & $ 32.84 $ \\ 
  DIS & $0.046 $ & $ 1.24 $ & $ 2.00 $ & $ 0.029 $ & $ 55.60 $ \\ 
  GS & $-0.00044 $ & $ 1.46 $ & $ -1.01 $ & $ 0.043 $ & $ 11.57 $ \\ 
  HD & $0.074 $ & $ 1.16 $ & $ -7.01 $ & $ 0.044 $ & $ 10.41 $ \\ 
  IBM & $-0.013 $ & $ 1.30 $ & $ -7.31 $ & $ 0.012 $ & $ 99.99 $ \\ 
  INTC & $0.075 $ & $ 1.70 $ & $ -1.19 $ & $ 0.024 $ & $ 74.81 $ \\ 
  JNJ & $0.041 $ & $ 1.09 $ & $ -7.53 $ & $ 0.020 $ & $ 91.31 $ \\ 
  JPM & $0.072 $ & $ 1.21 $ & $ 3.85 $ & $ 0.042 $ & $ 13.93 $ \\ 
  KO & $0.050 $ & $ 0.89 $ & $ -2.67 $ & $ 0.032 $ & $ 41.47 $ \\ 
  MCD & $0.076 $ & $ 1.04 $ & $ -3.17 $ & $ 0.024 $ & $ 77.08 $ \\ 
  MMM & $0.0093 $ & $ 1.36 $ & $ -10.55 $ & $ 0.0084 $ & $ 99.999 $ \\ 
  MRK & $0.067 $ & $ 1.14 $ & $ -0.68 $ & $ 0.037 $ & $ 24.66 $ \\ 
  MSFT & $0.13 $ & $ 1.36 $ & $ -4.47 $ & $ 0.027 $ & $ 65.24 $ \\ 
  NKE & $0.093 $ & $ 1.52 $ & $ 0.44 $ & $ 0.030 $ & $ 50.61 $ \\ 
  PFE & $0.037 $ & $ 1.08 $ & $ -7.19 $ & $ 0.032 $ & $ 41.80 $ \\ 
  PG & $0.064 $ & $ 1.00 $ & $ -0.26 $ & $ 0.025 $ & $ 71.99 $ \\ 
  RTX & $0.050 $ & $ 1.22 $ & $ -7.41 $ & $ 0.024 $ & $ 77.15 $ \\ 
  TRV & $0.025 $ & $ 1.07 $ & $ -10.52 $ & $ 0.036 $ & $ 26.68 $ \\ 
  UNH & $0.086 $ & $ 1.35 $ & $ -1.41 $ & $ 0.029 $ & $ 52.23 $ \\ 
  V & $0.12 $ & $ 1.22 $ & $ -10.07 $ & $ 0.023 $ & $ 82.92 $ \\ 
  VZ & $0.033 $ & $ 1.12 $ & $ -4.12 $ & $ 0.040 $ & $ 16.65 $ \\ 
  WBA & $-0.036 $ & $ 1.58 $ & $ -12.62 $ & $ 0.021 $ & $ 88.22 $ \\ 
  WMT & $0.083 $ & $ 1.21 $ & $ -1.64 $ & $ 0.019 $ & $ 93.86 $ \\ 
  XOM & $-0.019 $ & $ 1.12 $ & $ -4.16 $ & $ 0.051 $ & $ 4.14 $ \\ 
\hline
\multicolumn{6}{c}{$\alpha=0.9766$ and $\theta = 0.2253$}
\\
\hline
\end{tabular}
\caption{\label{table:ParamEst}NTS parameter fit using 1-day-returns from 1/1/2017 to 12/31/2019.  }
\end{table}

The estimated parameters of $\mu_n$, $\sigma_n$ and $\beta_n$ are presented in Table \ref{table:ParamEst}. The fixed $\alpha$ and $\theta$ are in the bottom of the table. Those parameters are estimated using 1-day log-returns of the 30 stocks from 1/1/2017 to 12/31/2019.
We can see that only 5 stocks (BA, DD, DIS, JPM, NKE) out of those 30 stocks have positive betas in this table. That means, 25 stocks follow left-skewed return distributions, while only 5 stocks follow positively skewed distributions. The parameter $\alpha$ closes to 1, and $\theta$ is less than 1. That means the 30-dimensional distribution for those 30 stock returns has fat-tails\footnote{More precisely, it has the semi-fat-tails having the exponential decaying tails}. We perform the Kolmogorov-Smirnov (KS) goodness of fit test. KS statistic (KS) values and those $p$-values are presented in the table too. According to the KS $p$-values in the table, the marginal NTS distributions are not rejected at 5\% significant level except XOM. At 4\% significant level, the marginal NTS distribution is not rejected for XOM either. We do not need to estimate the covariance matrix $\Sigma$ for $\varepsilon$, since the covariance matrix $\Sigma_R$ of $R$ is enough to analysis the portfolio return as discussed in Proposition \ref{pro:mu+sigma stdNTS}.

\section{Portfolio Optimization and Efficient Frontier}
In traditional mean-variance portfolio optimization, the investor finds the optimal portfolio which maximizes the reward and minimizes the dispersion risk.
The reward is the expected return of the portfolio and the dispersion risk is the standard deviation (or the variance) of the portfolio. However, the dispersion risk is only one feature of the portfolio risk, while there are many other risks for instance the asymmetric tail risk. If there are two portfolios having same mean and variance, and one of them follows a skewed right distribution and the other follows a skewed left distribution, then the skewed left distributed portfolio is more risky than the skewed right distributed portfolio. We take the NTS market model and measure both the dispersion risk and asymmetric tail risk of a given portfolio.

The classical dispersion risk measure introduced by \cite{Markowitz:1952} is the standard deviation of the portfolio defined as follows:
\[
\textup{Disp.}(w) = \sqrt{w^\tr \Sigma_R w}
\]
where $w$ is the capital allocation weight of the portfolio.
The asymmetric tail risk is related to asymmetric tails of the NTS distribution defined as follows:
\[
\textup{Asym.}(w) = w^\tr \beta.
\]
Since $|\beta_n|<\sqrt\frac{2\theta}{2-\alpha}$ for all $n\in\{1,2,\cdots, N\}$ by the definition of the NTS market model, $|\textup{Asym.}(w)|$ cannot be larger than $\sqrt\frac{2\theta}{2-\alpha}$, in $0\le w_n\le 1$ for all $n\in\{1,2,\cdots, N\}$.
A portfolio having positive $\textup{Asym.}(w)$ follows the right-skewed distribution, and a portfolio having negative $\textup{Asym.}(w)$ follows the left-skewed distribution. If there are two portfolios, A and B, whose capital allocation weight vectors are $w_A$ and $w_B$ respectively, and $\textup{Asym.}(w_A)<\textup{Asym.}(w_B)$, then portfolio A is more risky than portfolio B. That will be discussed again in Example \ref{ex:Asym}.

\begin{figure}[t]
\centering
\includegraphics[width = 12cm]{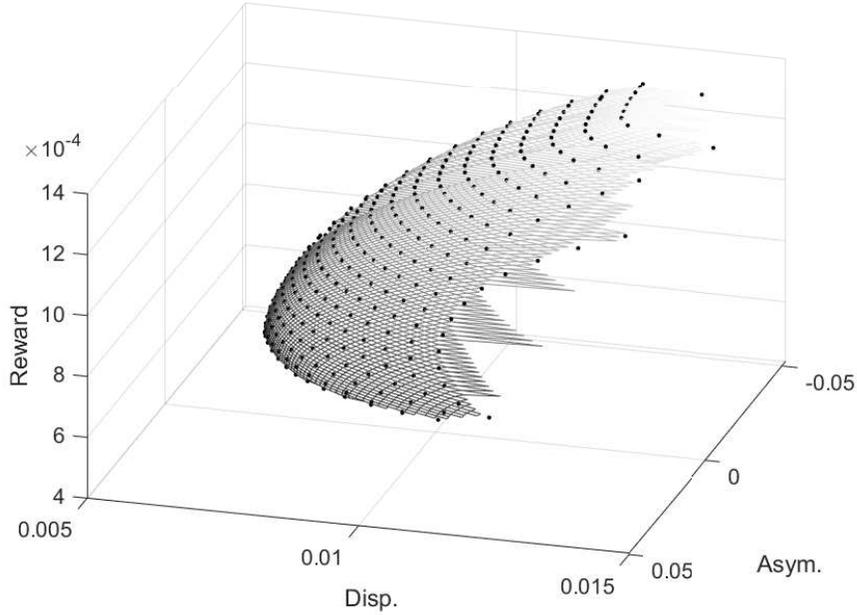}
\caption{\label{fig:Eff3D}Surface of efficient frontiers}
\end{figure}

Based on those two risk measures, we set a nonlinear programming problem for the portfolio optimization as
\begin{align*}
&\hspace{-2cm}\min_{w} \left(\textup{Disp.}(w)\right)\\
\text{subject to}~~~ &\sum_{n=1}^N w_n = 1\\
& w_n\ge 0 \text{ for all } n\in \{1,2,\cdots, N\} \\
& \textup{Asym.}(w) = w^\tr \beta \ge b^* \\
& \textup{Reward}(w) = w^\tr\mu \ge m^* 
\end{align*}
where the benchmark values for Portfolio reward and asymmetric tail risk are
$m^*\in[\min(\mu),\max(\mu)]$ and $b^*\in[\min(\beta),\max(\beta)]$, respectively. 

Using the parameters in Table \ref{table:ParamEst}, we perform the portfolio optimization for 51 points of $b^*$ in $\{b=\min(\beta)+k\cdot (\max(\beta)-\min(\beta))/50\,|\, k=0,1,2,\cdots, 50\}$ and for 51 points of $m^*$ in $\{m=\min(\mu)+k\cdot (\max(\mu)-\min(\mu))/50\,|\, k=0,1,2,\cdots, 50\}$. We finally obtain the efficient frontier surface on the three-dimensional space in Figure \ref{fig:Eff3D}. The dot-points of the surface is $(\textup{Disp.}(w), \textup{Asym.}(w), \textup{Reward}(w))$ for the optimal capital allocation weight $w$. The mash surface of the figure is an interpolation surface of the dot-points. 

\begin{figure}[t]
\centering
\includegraphics[width = 12cm]{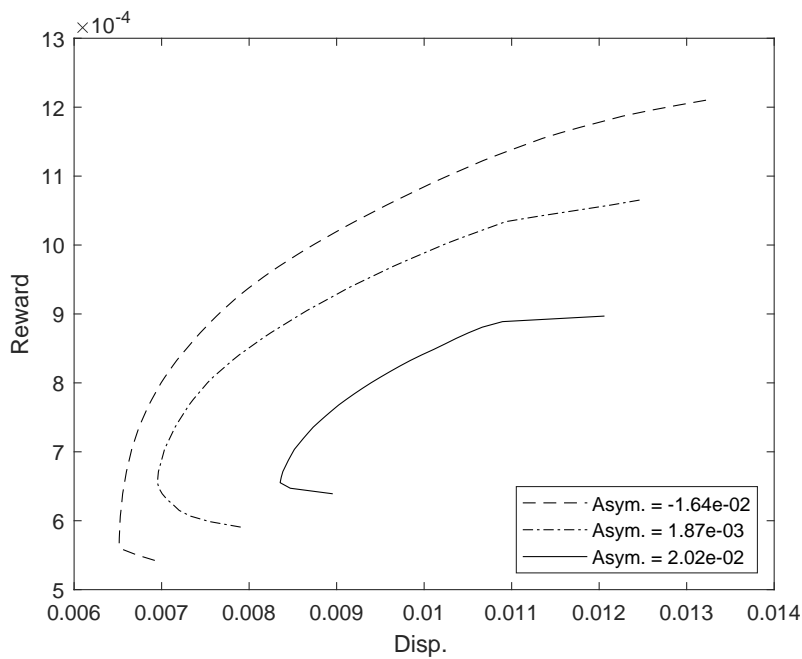}
\caption{\label{fig:Eff2D-Disp}Efficient frontiers between the dispersion risk and the reward. We can observe that the mean-variance efficient frontier is not unique but changes with respect to the asymmetric tail risk.}
\end{figure}

In order to look into the surface, Figure \ref{fig:Eff2D-Disp} provides three examples of efficient frontier curves between the dispersion risk and the reward. The three curves are extracted from Figure \ref{fig:Eff3D}. For instance, we fix $b^*=-1.64\%$, and perform the optimization for $m^*\in [\min(\mu),\max(\mu)]=[-0.036\%, 0.13\%]$, present it to the dash curve of the figure. Using the same method, we draw the dash-dot curve and the solid curve for $b^*=0.187\%$ and $b^*=2.02\%$, respectively. 
Figure \ref{fig:Eff2D-Asym} presents another examples of efficient frontier curves between the asymmetric tail risk and the reward. Those three curves are extracted from Figure \ref{fig:Eff3D}, too. The solid curve is for $(\textup{Asym.}(w), \textup{Reward}(w))$ having $\textup{Disp.}(w)=0.840\%$. Also, the dash-dot and the dashed curves are for $\textup{Disp.}(w)=0.961\%$ and for $\textup{Disp.}(w)=1.09\%$, respectively. According to those two figures, we observe that the classical (mean-variance) efficient frontier is not unique but there are many various form of efficient frontier curves with respect to the asymmetric tail risk. Moreover, we can observe that the reward is decreasing when the asymmetric tail risk is increasing under the fixed dispersion risk.

\begin{figure}[t]
\centering
\includegraphics[width = 12cm]{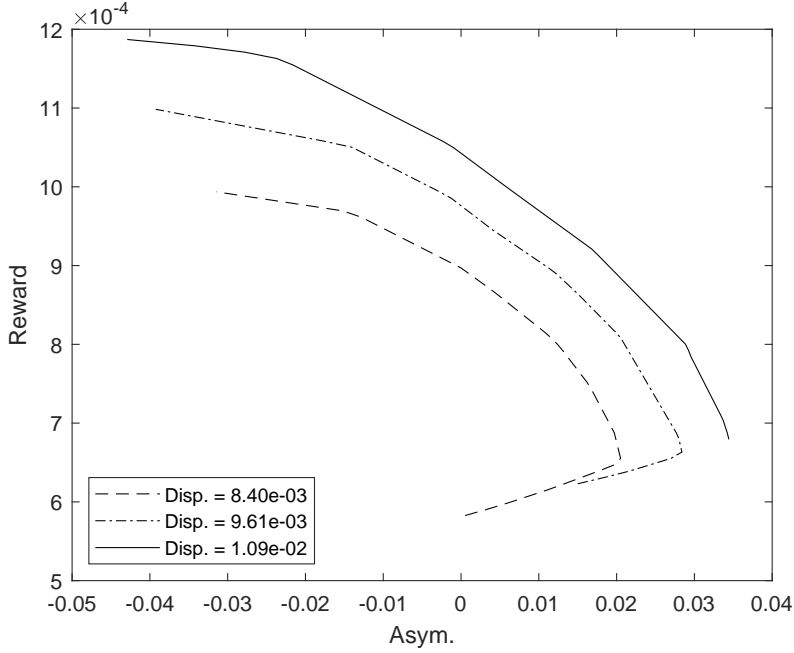}
\caption{\label{fig:Eff2D-Asym}Efficient frontiers between the asymmetric tail risk and the reward. We can observe that the reward is decreasing when the asymmetric tail risk is increasing under the fixed dispersion risk.}
\end{figure}

\section{Performance Measure}
In this section, we discuss performance measure of portfolios. We present a new performance measure on the NTS Market model. The new performance measure is an extension of the Sharpe ratio (\cite{Sharpe:1966,Sharpe:1994}), and it measures performance of portfolios considering not only the dispersion risk but also asymmetric tail risk.

The Sharpe ratio is defined as
\[
\mathcal{R}_S(w,r_f)=\frac{\textup{Reward}(w)-r_f}{\textup{Disp.}(w)}
\]
where $r_f$ is the risk free rate of return, and the tangency portfolio is the Sharpe ratio maximizing portfolio on the efficient frontier. In the NTS market model, the efficient frontier is not given by unique curve but surface, we have many tangency portfolios with respect to asymmetric tail risks. For each $\textup{Asym.}(w)$, we define the tangency portfolio as
\[
w_{tangency}(b) = \argmax_{w\in \mathcal{S}} \left\{\mathcal{R}_S(w,r_f) \Big| \textup{Asym.}(w)\ge b\right\},
\]
where $\mathcal{S} = \{w=(w_1,\cdots, w_N)\in I^N \,|\, I = [0,1], \sum_{n=1}^N w_n=1 \}$.

We define a new performance measure as
\[
\mathcal{R}_{AS}(b, r_f)=\frac{\mathcal{R}_S(w_{tangency}(b),r_f)}{A(b)},
\]
where
\[
A(b) = \frac{\sqrt\frac{2\theta}{2-\alpha}-b}{2\sqrt\frac{2\theta}{2-\alpha}}
\]
for the level $b$ of asymmetry tail risk.
The measure  $\mathcal{R}_{AS}(b, r_f)$ is referred to as the \textit{asymmetry scored Sharpe ratio} or simply \textit{AS ratio}. 
The value $A(b)$ increases when $b$ decreases, and $A(b)$ decreases when $b$ increases. Moreover, $0\le A(\textup{Asym.}(w)) \le 1$ since $|\textup{Asym.}(w)|\le\sqrt\frac{2\theta}{2-\alpha}$ if $w\in I^N$. Hence, we can use $A(\textup{Asym.}(w))$ as the evaluation index for a portfolio. Suppose we have two portfolios $P_w$ and $P_v$ having capital allocation weights $w$ and $v$, respectively. Assume that those two portfolio have the same Sharpe ratio. If $\text{Asym.}(w)>\text{Asym.}(v)$ i.e. $A(\textup{Asym.}(w))<A(\textup{Asym.}(v))$ then a risk averse investor may select the portfolio $P_w$ instead of $P_v$ because the return distribution of $P_v$ has the fatter negative tail than the return distribution of $P_w$. We look into this again in the following simple example.
\begin{example}\label{ex:Asym}
Suppose we have a market with three assets with the return $R=(R_1, R_2, R_3)^\tr$, and $R$ follows the NTS market model as
\[
R = \mu + \diag(\sigma) X, ~~~ X\sim \stdNTS_3(\alpha, \theta, \beta, \Sigma)
\]
where $\mu=(0.05, 0.05, 0.05)^\tr$, $\sigma=(\sqrt{0.08}, \sqrt{0.08}, \sqrt{0.08})^\tr$, $\alpha = 1.2$, $\theta = 1$, and $\beta = (1, 0, -1)^\tr$. In order to simplify the example we assume that $\cov(R_k, R_l)=0$ if $k\neq l$\footnote{For $k\neq l$, if the $(k,l)$-th element $\rho_{k,l}$ of $\Sigma$ is equal to 
\[
\rho_{k,l}=\frac{-\beta_k \beta_l\left(\frac{2-\alpha}{2\theta}\right)}{\sqrt{1-\beta_k\left(\frac{2-\alpha}{2\theta}\right)}\sqrt{1-\beta_l\left(\frac{2-\alpha}{2\theta}\right)}},
\]
then we have $\cov(R_k,R_l)=0$.}. Consider two portfolios $P_L$ and $P_D$ whose capital allocation weight vectors are $w_L=(0,0.5, 0.5)^\tr$ and $w_D=(0.5,0.5,0)^\tr$, respectively. Then, we have
\[
R_L = w_L^\tr R \disteq \bar\mu_L + \bar\sigma_L \Xi_L, \text{ and } R_D = w_D^\tr R \disteq \bar\mu_D + \bar\sigma_D \Xi_D,
\]
where $\bar\mu_L=\bar\mu_D=0.05$ and $\bar\sigma_L=\bar\sigma_D=0.2$. Since $\bar\beta(w_L)=\frac{0.5\sqrt{0.08}(-1)}{0.2}=-0.707$ and $\bar\beta(w_D)=0.707$, we have $\Xi_L\sim\stdNTS_1(1.2,1,-0.707, 1)$ and $\Xi_D\sim\stdNTS_1(1.2,1,0.707, 1)$. Hence, $P_L$ and $P_D$ have the same mean \& standard deviation, and the same Sharpe ratio. However, 
\[
A(\textup{Asym.}(w_L))=0.6581 > 0.3419 = A(\textup{Asym.}(w_D)),
\]
that means $P_L$ is more risky than $P_D$.
For that reason, A risk averse investor may select $P_D$ rather than $P_L$.
More detail illustration is presented in Figure \ref{fig:DvsL}. The dashed and solid curves are PDF of $R_L$ and $R_D$, respectively. The circle and inverted triangle are negative values of VaR for $P_L$ and $P_D$, respectively. VaR of $P_L$ is larger then VaR of $P_D$ at $1\%$ significant level. Therefore, $P_D$ is better than $P_L$ in view of the risk averse investor.
\end{example}
\begin{figure}[t]
\centering
\includegraphics[width = 12cm]{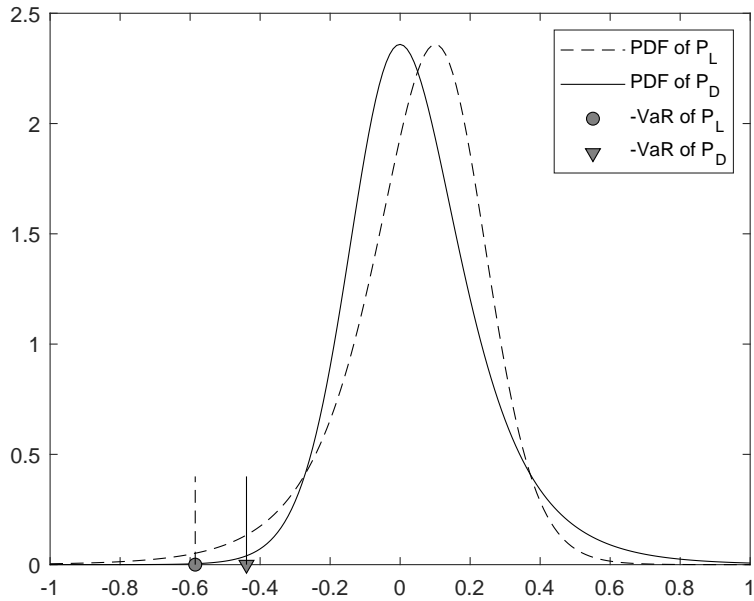}
\caption{\label{fig:DvsL} PDF of $P_L$ versus PDF of $P_D$. The dashed curve is PDF of $P_L$ and the solid curve is PDF od $P_D$. The circle and the inverted triangle are VaR's of $P_L$ and $P_D$, respectively, at 1\% sidnificant level.}
\end{figure}

Using the AS ratio, the Sharpe ratio of the tendency portfolio is evaluated by the level of the asymmetric tail risk. We find the AS ratio maximizing portfolio as follows:
\begin{align*}
P^* &= \max_{b\in[\min(\beta),\max(\beta)]} \mathcal{R}_{AS}(b, r_f)
\end{align*}
The $b^*$ corresponding $b$ value for the AS ratio maximizing portfolio $P^*$ is
\begin{align*}
b^* &= \argmax_{b\in[\min(\beta),\max(\beta)]} \mathcal{R}_{AS}(b, r_f).
\end{align*}
The capital allocation weight vector $w^*$ corresponding to $P^*$ is 
$w^* = w_{tangency}(b^*)$.

\begin{figure}
\centering
\includegraphics[width = 12cm]{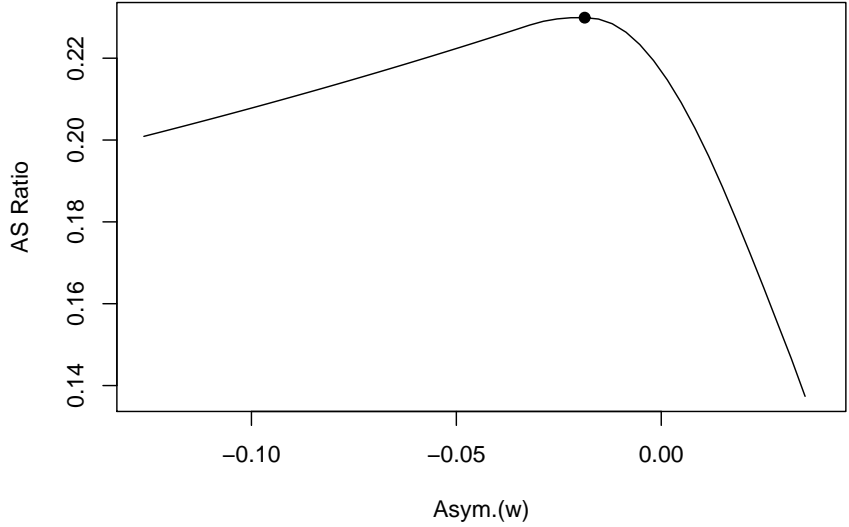}
\caption{\label{fig:perfMaxPoint}AS ratio curve. The $x$-axis is for the level $b$ of asymmetric tail risk and the $y$-axis is for AS ratios.}
\end{figure}
Figure \ref{fig:perfMaxPoint} presents the AS ratio curve with respect to the asymmetric tail risk based on $r_f=0.025/252$ and the estimated market parameters in Table \ref{table:ParamEst}. 
The solid curve is a $\mathcal{R}_{AS}(b, r_f)$ for $b\in[\min(\beta), \max(\beta)]$, and the dot point is the AS ratio maximizing portfolio $(b^*,P^*)=(-0.0187, 0.2299)$.\\
~\\
\begin{figure}[t]
\centering
\includegraphics[width=15cm]{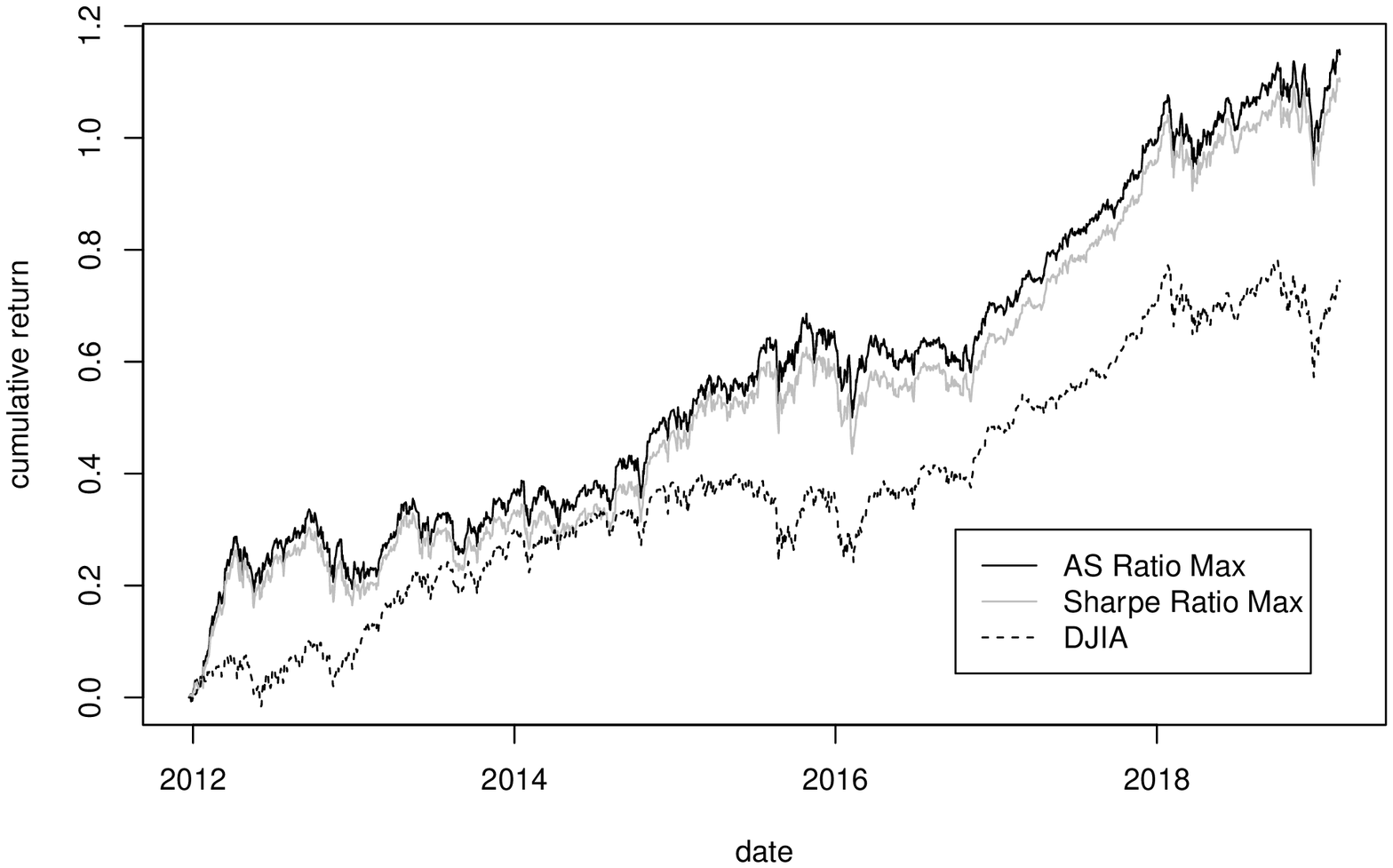}
\caption{\label{fig:hotseracing}Cumulative return curves for AS ratio maximization strategy (solid black), the Sharpe ratio maximization strategy (solid gray) and DJIA index (dashed black).}
\end{figure}
\textit{Trading Strategies}

We estimate parameters and rebalance portfolio every 10 days from 12/23/2011 to 02/21/2019.
We perform the parameter estimation explained in Section \ref{sec:NTS}, and then find the AS ratio maximizing portfolio and the Sharpe ratio maximizing portfolio for every 10 days. We set the risk free rate of return to be zero to simplify this investigation. We do not change those two portfolios for 10 days and re-balance them at the first day of the next 10 days. Figure \ref{fig:hotseracing} provides daily cumulative returns for those two strategies (the AS ratio maximizing and the Sharpe ratio maximizing strategies) and DJIA index as a bench mark portfolio return. In Table \ref{table:hotseracing}, we can see the performance of this investigation. The AS ratio maximization strategy has the larger mean return than the Sharpe ratio maximization strategy and DJIA index. The standard deviation, VaR and CVaR of the AS ratio maximization portfolio are larger than those values of the Sharpe ratio maximization portfolio. However, the Sharpe Ratio, VaR Ratio, and CVaR Ratio of the AS ratio maximization strategy are larger than those values of the Sharpe ratio maximization strategy.
The stdNTS parameters are also provided in the table for those three cases. The absolute value of beta parameter of DJIA is smallest among the three cases. The absolute value of beta for the AS ratio maximization strategy is smaller than that value of the Sharpe ratio maximization strategy. The AS ratio maximization strategy has largest $\mathcal{R}_{AS}(\beta, r_f)$ value comparing with those value of the Sharpe ratio maximization strategy and DJIA index. Therefore, the AS ratio maximization strategy has slightly better performance than the Sharpe ratio maximization strategy in this investigation. The AS ratio maximization strategy performs better than DJIA index, too.

\begin{table}[t]
\centering
\begin{tabular}{rrrrr}
  \hline
 & AS Ratio Max & Sharpe Ratio Max &  DJIA \\ 
  \hline
mean & $0.0645\%$ & $0.0618\%$ & $0.0419\%$ \\ 
  std dev & $0.984\%$ & $0.942\%$ & $0.804\%$ \\ 
  $\alpha$ & $0.46$ & $0.56$ & $0.39$ \\
  $\theta$ & $1.09$ & $0.96$ & $0.79$ \\
  $\beta$ & $-0.0603$ & $-0.0712$ & $-0.0525$\\
  \hline
  VaR & $2.49\%$ & $2.42\%$ &  $2.34\%$ \\ 
  CVaR & $3.22\%$ & $3.14\%$ &  $3.01\%$ \\ 
  \hline
  Sharpe Ratio & $6.61\%$ & $6.56\%$  & $5.21\%$ \\ 
  VaR Ratio & $2.59\%$ & $2.56\%$  & $1.79\%$ \\ 
  CVaR Ratio & $2.01\%$ & $1.97\%$ & $1.39\%$ \\ 
  $\mathcal{R}_{AS}(\beta, r_f)$ & $12.57\%$ & $12.35\%$ & $9.89\%$ \\
   \hline
\end{tabular}
\caption{\label{table:hotseracing}Performance of the AS ratio maximization strategy, the Sharpe ratio maximization portfolio and DJIA index. (``std dev" means standard deviation.)}
\end{table}

\section{Marginal Contribution to Risk}
In this section, we discuss the marginal contribution to risk on the NTS market model. We consider VaR and CVaR as risk measures, and find analytic solutions of the marginal contribution to VaR and to CVaR, respectively. Suppose $R$ is portfolio return as \eqref{eq:NTS Market} in the NTS market model. Let $w$ be a capital allocation rate vector, then $R_P(w)=w^\tr R$ is given by \eqref{eq:NTS Portfolio Return}. 

Let $F_\stdNTS(x,\alpha,\theta, \beta)$, $F_\stdNTS^{-1}(u,\alpha,\theta, \beta)$ and $\phi_\stdNTS(u, \alpha, \theta, \beta)$ be the CDF, the inverse CDF, and the characteristic function (Ch.F) of $\stdNTS_1(\alpha, \theta, \beta, 1)$. Then we have
\[
\VaR_\eta(\Xi) = -F_\stdNTS^{-1}(\eta,\alpha,\theta, \bar\beta(w)), ~~~\text{ and }~~~
\phi_\Xi(-u+i\delta) = \phi_\stdNTS(-u+i\delta, \alpha, \theta, \bar\beta(w)).
\]
Moreover, let $\CVaR_\stdNTS(\eta, \alpha, \theta, \beta)$ be the $\CVaR$ value of $\stdNTS_1(\alpha, \theta, \beta, 1)$ at the significant level $\eta$. If there is $\delta>0$ such that $|\phi_\Xi(-u+i\delta)|<\infty$ for all $u\in\R$
then we have\footnote{See Proposition 2 in \cite{Kim_et_al:2009b:AVaR} and \cite{Kim_et_al:2012}.}
\begin{align}
\nonumber
&\CVaR_\stdNTS(\eta, \alpha, \theta, \beta) \\
&= -F_\stdNTS^{-1}(u,\alpha,\theta, \beta) - \frac{1}{\pi\eta}\re\int_0^\infty e^{(iu+\delta)F_\stdNTS^{-1}(u,\alpha,\theta, \beta)}\frac{\phi_\stdNTS(-u+i\delta, \alpha, \theta, \beta)}{(-u+i\delta)^2}du,
\label{eq:CVaR stdNTS}
\end{align}
and
\[
\CVaR_\eta(\Xi)=\CVaR_\stdNTS(\eta, \alpha, \theta, \bar\beta(w)).
\]
Therefore, the VaR and CVaR of the portfolio return $R_P(w)$ are as follows:
\begin{align*}
\VaR_\eta(R_p(w)) &= -\bar\mu(w) + \bar\sigma(w) \VaR_\eta(\Xi)\\
&= -\bar\mu(w) - \bar\sigma(w) F_\stdNTS^{-1}(\eta, \alpha, \theta, \bar\beta(w))
\end{align*}
and
\begin{align*}
\CVaR_\eta(R_p(w)) &= -\bar\mu(w) +\bar\sigma(w) \CVaR_\eta(\Xi)\\
&=  -\bar\mu(w) +\bar\sigma(w)\CVaR_\stdNTS(\eta, \alpha, \theta, \bar\beta(w)).
\end{align*}

The first derivative of $\VaR_\eta(R_P(w))$ and the first derivative of $\CVaR_\eta(R_P(w))$ for $w_n$ are referred to as \textit{the marginal contribution to VaR} (MCT-VaR) and \textit{the marginal contribution to CVaR} (MCT-CVaR), respectively. They are provided the following proposition whose proof is in the Appendix.
\begin{proposition}\label{prop:MCT VaR and CVaR}
Suppose a portfolio return $R$ follows the NTS market model as \eqref{eq:NTS Market}. Let $w$ be a capital allocation rate vector, and $R_P(w)=w^\tr R$. The marginal contribution to VaR and to CVaR are equal to
\begin{align}
\nonumber
\frac{\partial \VaR_\eta(R_p(w))}{\partial w_n} 
&= -\mu_n -F_\stdNTS^{-1}(\eta, \alpha, \theta, \bar\beta(w))\frac{\partial \bar\sigma(w)}{\partial w_n} 
\\
&
 - \left(\sigma_n\beta_n - \bar\beta(w) \frac{\partial \bar\sigma(w)}{\partial w_n}\right)\frac{\partial F_\stdNTS^{-1}(\eta, \alpha, \theta, \beta)}{\partial \beta}\Big|_{\beta = \bar\beta(w)}
 \label{eq:MTC VaR}
\end{align}
and
\begin{align}
\nonumber \frac{\partial \CVaR_\eta(R_p(w))}{\partial w_n}
&
= -\mu_n + \CVaR_\stdNTS(\eta, \alpha, \theta, \bar\beta(w))\frac{\partial \bar\sigma(w)}{\partial w_n} 
\\
&
~~~ + \left(\sigma_n\beta_n - \bar\beta(w) \frac{\partial \bar\sigma(w)}{\partial w_n}\right)\frac{\partial \CVaR_\stdNTS(\eta, \alpha, \theta, \beta)}{\partial \beta}\Big|_{\beta = \bar\beta(w)},\label{eq:MCT CVaR}
\end{align}
respectively, where 
\begin{align*}
\frac{\partial \bar\sigma(w)}{\partial w_n} = \frac{\sum_{k=1}^Nw_k\cov(R_n, R_k)}{\bar\sigma(w)} ~~~\text{ for }~~~ n\in\{1,2,\cdots, N\},
\end{align*}
 and
\begin{align*}
\frac{\partial \CVaR_\stdNTS(\eta, \alpha, \theta, \beta)}{\partial \beta}
&=
-\frac{\partial F_\stdNTS^{-1}(\eta, \alpha, \theta, \beta)}{\partial \beta}
\\
&
-\frac{1}{\pi\eta}\re\int_0^\infty e^{(iu+\delta)F_\stdNTS^{-1}(\eta, \alpha, \theta, \beta)}\frac{\phi_\stdNTS(-u+i\delta, \alpha, \theta, \beta)}{(-u+i\delta)^2}
\\
&
~~~~~~\times\left((\delta+iu)\frac{\partial F_\stdNTS^{-1}(\eta, \alpha, \theta, \beta)}{\partial \beta}+\psi(-u+i\delta, \alpha, \theta, \beta)\right)du,
\end{align*}
with
\begin{align*}
&\psi(z, \alpha, \theta, \beta)
\\
&=-zi
	+\left(
		1-\frac{iz\beta}{\theta}
			+\left(
				1-\frac{\beta^2(2-\alpha)}{2\theta}
			\right)\frac{z^2}{2\theta}
	\right)^{\frac{\alpha}{2}-1}
	\left(
		zi
		+\frac{\beta(2-\alpha)}{2\theta}z^2
	\right).
\end{align*}
\end{proposition}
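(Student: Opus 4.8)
The plan is to reduce both marginal contributions to the one--dimensional representation of $R_P(w)$ supplied by Proposition~\ref{pro:mu+sigma stdNTS} and then apply the chain rule. By that proposition $R_P(w)\disteq\bar\mu(w)+\bar\sigma(w)\,\Xi$ with $\Xi\sim\stdNTS_1(\alpha,\theta,\bar\beta(w),1)$, so $\VaR_\eta$ and $\CVaR_\eta$ of $R_P(w)$ are the explicit functions of $(\bar\mu(w),\bar\sigma(w),\bar\beta(w))$ displayed just above the statement, namely $\VaR_\eta(R_P(w))=-\bar\mu(w)-\bar\sigma(w)F_\stdNTS^{-1}(\eta,\alpha,\theta,\bar\beta(w))$ and $\CVaR_\eta(R_P(w))=-\bar\mu(w)+\bar\sigma(w)\CVaR_\stdNTS(\eta,\alpha,\theta,\bar\beta(w))$. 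The whole proposition is then a differentiation of these composite expressions in $w_n$.

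First I would record the partial derivatives of the three ``bar'' functionals. From $\bar\mu(w)=w^\tr\mu$ and $\bar\sigma(w)^2=w^\tr\Sigma_R w$ we get $\partial\bar\mu/\partial w_n=\mu_n$ and $\partial\bar\sigma/\partial w_n=(\Sigma_R w)_n/\bar\sigma(w)=\bigl(\sum_{k}w_k\cov(R_n,R_k)\bigr)/\bar\sigma(w)$, which is the displayed formula for $\partial\bar\sigma/\partial w_n$. For $\bar\beta(w)=\bigl(w^\tr\diag(\sigma)\beta\bigr)/\bar\sigma(w)$ the quotient rule gives $\partial\bar\beta/\partial w_n=\bigl(\sigma_n\beta_n-\bar\beta(w)\,\partial\bar\sigma/\partial w_n\bigr)/\bar\sigma(w)$. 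Substituting these into the chain rule for $\VaR_\eta(R_P(w))$ --- differentiating through $\bar\mu$, through $\bar\sigma$, and through $\bar\beta$ inside $F_\stdNTS^{-1}(\eta,\alpha,\theta,\cdot)$ --- the factor $\bar\sigma(w)$ that multiplies $\partial F_\stdNTS^{-1}/\partial\beta$ cancels exactly against the $1/\bar\sigma(w)$ in $\partial\bar\beta/\partial w_n$, leaving \eqref{eq:MTC VaR}. The same computation applied to $\CVaR_\eta(R_P(w))$ yields \eqref{eq:MCT CVaR}, now with $\partial\CVaR_\stdNTS/\partial\beta$ replacing $\partial F_\stdNTS^{-1}/\partial\beta$. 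Differentiability of $\beta\mapsto F_\stdNTS^{-1}(\eta,\alpha,\theta,\beta)$ used here follows from the implicit function theorem applied to $F_\stdNTS\bigl(F_\stdNTS^{-1}(\eta,\alpha,\theta,\beta),\alpha,\theta,\beta\bigr)=\eta$, since the stdNTS density is smooth and strictly positive.

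It then remains to differentiate the integral representation \eqref{eq:CVaR stdNTS} of $\CVaR_\stdNTS(\eta,\alpha,\theta,\beta)$ in $\beta$. I would differentiate under the integral sign and apply the product rule to the integrand $e^{(iu+\delta)F_\stdNTS^{-1}}\,\phi_\stdNTS(-u+i\delta,\alpha,\theta,\beta)/(-u+i\delta)^2$: the exponential factor contributes $(\delta+iu)\,\partial F_\stdNTS^{-1}/\partial\beta$, and the $\phi_\stdNTS/(-u+i\delta)^2$ factor contributes $\partial_\beta\log\phi_\stdNTS(-u+i\delta,\alpha,\theta,\beta)$. The remaining task is the identity $\psi(z,\alpha,\theta,\beta)=\partial_\beta\log\phi_\stdNTS(z,\alpha,\theta,\beta)$, a direct calculation from the closed form of the standard NTS characteristic function in the Appendix. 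Conditioning on $\Tau$ gives $\phi_\stdNTS(z,\alpha,\theta,\beta)=e^{-iz\beta}\,\mathcal{L}_{\Tau}\bigl(-iz\beta+\tfrac12 z^2\gamma^2\bigr)$ with $\gamma^2=1-\beta^2(2-\alpha)/(2\theta)$ and $\mathcal{L}_{\Tau}$ the Laplace transform of the tempered stable subordinator, so that $\log\phi_\stdNTS(z,\alpha,\theta,\beta)=-iz\beta-\tfrac{2\theta}{\alpha}\bigl(G(z,\beta)^{\alpha/2}-1\bigr)$ with $G(z,\beta)=1-iz\beta/\theta+\bigl(1-\beta^2(2-\alpha)/(2\theta)\bigr)z^2/(2\theta)$; differentiating this in $\beta$ and simplifying produces exactly the stated $\psi$. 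Plugging back yields the displayed formula for $\partial\CVaR_\stdNTS/\partial\beta$.

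The main obstacle is the analytic justification of differentiation under the integral sign: one must exhibit a dominating function for the $\beta$-derivative of the integrand that is integrable in $u$ and uniform for $\beta$ in a neighborhood of $\bar\beta(w)$. This is where the hypothesis that some $\delta>0$ makes $u\mapsto|\phi_\Xi(-u+i\delta)|$ integrable enters, together with the fact that this integrability is stable under small perturbations of $\beta$ --- which in turn relies on the model constraint $|\beta_n|<\sqrt{2\theta/(2-\alpha)}$ keeping $\gamma^2>0$, so the Gaussian factor still supplies exponential decay in $u$. Everything else --- the quotient and chain rules and the verification that $\psi$ is the logarithmic $\beta$-derivative of $\phi_\stdNTS$ --- is routine algebra.
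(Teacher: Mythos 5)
Your proposal is correct and follows essentially the same route as the paper's own proof: reduce to the one--dimensional representation $R_P(w)\disteq\bar\mu(w)+\bar\sigma(w)\Xi$ from Proposition~\ref{pro:mu+sigma stdNTS}, apply the quotient/chain rules through $(\bar\mu,\bar\sigma,\bar\beta)$ with the same cancellation of $\bar\sigma(w)$, differentiate the integral representation \eqref{eq:CVaR stdNTS} under the integral sign, and identify $\psi$ as $\partial_\beta\log\phi_\stdNTS$. The only difference is that you explicitly flag the analytic justifications (implicit function theorem for $\partial_\beta F_\stdNTS^{-1}$ and a dominating function for the interchange of derivative and integral), which the paper takes for granted.
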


\subsection{Empirical Illustration}
In this subsection, we calculate MCT-VaR and MCT-CVaR of the NTS market model with the parameters fit to the empirical data and compare them to those values of a benchmark model. As a benchmark model, we consider the Gaussian market which is defined as
\[
R = \mu + \diag(\sigma) Z, ~~~ Z\sim \Phi(0, \Sigma_Z),
\]
where $\Phi(0, \Sigma_Z)$ is the $N$-dimensional standard normal distribution with covariance matrix $\Sigma_Z$ whose diagonal entries are 1. The portfolio return in this model is
\[
R_P = w^\tr R \disteq \bar\mu(w) + \bar\sigma(w)\epsilon,
\]
where $\epsilon\sim \Phi(0,1)$, $\bar\mu(w) = w^\tr \mu$, $\bar\sigma(w) = \sqrt{w^\tr \diag(\sigma)\Sigma_Z\diag(\sigma)w}$ and the capital allocation weight vector $w$.
Under this model, MCT-VaR and MCT-CVaR for significant level $\eta$ is equal to
\begin{equation}\label{eq:MCT-VaR Gaussian}
\frac{\partial \VaR_\eta(R_P(w))}{\partial w_n} = -\mu_n - F^{-1}_{\Phi(0,1)}(\eta)\frac{\sum_{k=1}^N w_k\cov(R_n, R_k)}{\bar\sigma(w)},
\end{equation}
and
\begin{equation}\label{eq:MCT-CVaR Gaussian}
\frac{\partial \CVaR_\eta(R_p(w))}{\partial w_n}
= -\mu_n + \CVaR_\eta(\epsilon)\frac{\sum_{k=1}^N w_k\cov(R_n, R_k)}{\bar\sigma(w)},
\end{equation}
where $F^{-1}_{\Phi(0,1)}$ is the inverse function of the CDF of $\Phi(0,1)$.
Comparing MCT-VaR \eqref{eq:MCT-VaR Gaussian} of the Gaussian model and MCT-VaR of \eqref{eq:MTC VaR} the NTS model, \eqref{eq:MTC VaR} has one more term related to $\beta$, than \eqref{eq:MCT-VaR Gaussian}. The asymmetry parameter $\beta$ determines skewness, and the left/right tail decaying of the distribution. We can say the same argument for MCT-CVaR \eqref{eq:MCT CVaR} and \eqref{eq:MCT-CVaR Gaussian}.

We consider equally weighted portfolio with 30 stocks in Table \ref{table:DJIA Members}, and use the estimated parameters in Table \ref{table:ParamEst}. The MCT-VaR and MCT-CVaR of the Gaussian market model are calculated by \eqref{eq:MCT-VaR Gaussian} and \eqref{eq:MCT-CVaR Gaussian} for $\eta=0.01$.
The MCT-VaR and MCT-CVaR of the NTS market model are calculated by Proposition \ref{prop:MCT VaR and CVaR} and results are presented in Table \ref{table:MCTR and Ranking}. The table exhibits MCT-VaR and MCT-CVaR for the Gaussian model and the NTS model together with the rank of the values for the ascending order.
The NTS market model describes skewness and fat-tails of portfolio return distribution, while the Gaussian market model is symmetric and its VaR and CVaR are effected by only dispersion factor. That generates the difference results between the Gaussian model column and the NTS model column in Table \ref{table:MCTR and Ranking}. For example, MCT-VaR and MCT-CVaR values for JPM rank 21th in the Gaussian model while those values rank 14th in the NTS model. The MCT-VaR and MCT-CVaR values of WBA rank 19th and 17th, respectively, in the Gaussian model, while they rank 23rd and 24th, respectively, in the NTS model.

\begin{table}
\centering
\begin{tabular}{c|cc|cc|cc|cc}
  \hline
  & \multicolumn{4}{c|}{Gaussian Model} & \multicolumn{4}{c}{NTS Model}\\
  \hline
  Symbol & \footnotesize{MCT-VaR(\%)} & Rank & \footnotesize{MCT-CVaR(\%)} & Rank  & \footnotesize{MCT-VaR(\%)} & Rank  & \footnotesize{MCT-CVaR(\%)}  &  Rank\\ 
  \hline
    \hline
AAPL & 2.23 &  24 & 2.58 &  24 & 2.91 &  22 & 3.99 &  22 \\ 
  AXP & 1.97 &  20 & 2.27 &  20 & 2.59 &  17 & 3.54 &  17 \\ 
  BA & 2.21 &  23 & 2.55 &  23 & 2.76 &  20 & 3.75 &  20 \\ 
  CAT & 2.77 &  30 & 3.18 &  30 & 3.59 &  30 & 4.89 &  30 \\ 
  CSCO & 2.36 &  26 & 2.71 &  26 & 3.27 &  29 & 4.49 &  29 \\ 
  CVX & 1.70 &  12 & 1.96 &  12 & 2.27 &  13 & 3.10 &  13 \\ 
  DD & 2.55 &  29 & 2.92 &  29 & 3.15 &  28 & 4.24 &  27 \\ 
  DIS & 1.53 &  10 & 1.76 &  10 & 1.85 &   7 & 2.50 &   7 \\ 
  GS & 2.38 &  27 & 2.73 &  27 & 3.00 &  24 & 4.05 &  23 \\ 
  HD & 1.73 &  14 & 1.99 &  14 & 2.38 &  15 & 3.28 &  15 \\ 
  IBM & 1.92 &  15 & 2.20 &  15 & 2.63 &  18 & 3.59 &  18 \\ 
  INTC & 2.43 &  28 & 2.79 &  28 & 3.09 &  26 & 4.20 &  26 \\ 
  JNJ & 1.34 &   7 & 1.54 &   7 & 1.88 &   8 & 2.60 &   8 \\ 
  JPM & 1.97 &  21 & 2.27 &  21 & 2.35 &  14 & 3.17 &  14 \\ 
  KO & 0.87 &   1 & 1.00 &   1 & 1.15 &   1 & 1.58 &   1 \\ 
  MCD & 0.96 &   3 & 1.11 &   3 & 1.30 &   3 & 1.80 &   3 \\ 
  MMM & 2.20 &  22 & 2.52 &  22 & 3.11 &  27 & 4.27 &  28 \\ 
  MRK & 1.25 &   6 & 1.45 &   5 & 1.60 &   5 & 2.18 &   5 \\ 
  MSFT & 2.27 &  25 & 2.62 &  25 & 3.01 &  25 & 4.13 &  25 \\ 
  NKE & 1.96 &  17 & 2.25 &  18 & 2.44 &  16 & 3.32 &  16 \\ 
  PFE & 1.45 &   8 & 1.67 &   9 & 2.02 &   9 & 2.77 &   9 \\ 
  PG & 0.95 &   2 & 1.10 &   2 & 1.21 &   2 & 1.65 &   2 \\ 
  RTX & 1.95 &  16 & 2.24 &  16 & 2.67 &  19 & 3.66 &  19 \\ 
  TRV & 1.45 &   9 & 1.67 &   8 & 2.11 &  11 & 2.90 &  11 \\ 
  UNH & 1.62 &  11 & 1.87 &  11 & 2.08 &  10 & 2.85 &  10 \\ 
  V & 1.96 &  18 & 2.26 &  19 & 2.78 &  21 & 3.86 &  21 \\ 
  VZ & 1.00 &   4 & 1.15 &   4 & 1.37 &   4 & 1.88 &   4 \\ 
  WBA & 1.96 &  19 & 2.25 &  17 & 2.95 &  23 & 4.07 &  24 \\ 
  WMT & 1.25 &   5 & 1.45 &   6 & 1.63 &   6 & 2.24 &   6 \\ 
  XOM & 1.72 &  13 & 1.97 &  13 & 2.26 &  12 & 3.06 &  12 \\  
   \hline   \hline
\end{tabular}
\caption{\label{table:MCTR and Ranking}MCT-VaR and MCT-CVaR for 30 stocks, under the Gaussian model and the NTS model. }
\end{table}

\subsection{Risk Budgeting}
Suppose a capital allocation weight vector $w$ is given and let $\varDelta w=(\varDelta w_1, \varDelta w_2, \cdots, \varDelta w_N)^\tr\in D$ where $D$ is a zero neighborhood in $\R^N$.
The optimal portfolios with respect to $\VaR$ and $\CVaR$ are obtained by solving the following problem:
\begin{align}
&\min_{\varDelta w} \varDelta \VaR_\eta (R_P(w))  \\
\nonumber
&\text{subject to }  \varDelta E[R_P(w)]\ge 0
\text{ and } \sum_{n=1}^N \varDelta w_n = 0,
\end{align}
and 
\begin{align}
&\min_{\varDelta w} \varDelta \CVaR_\eta (R_P(w))  \\
\nonumber
&\text{subject to }  \varDelta E[R_P(w)]\ge 0
\text{ and } \sum_{n=1}^N \varDelta w_n = 0.
\end{align}
where
\begin{align*}
&\varDelta E[R_P(w)] = E[R_p(w+\varDelta w)]-E[R_P(w)] \\
&\varDelta \VaR_\eta(R_p(w))= \VaR_\eta(R_p(w+\varDelta w))-\VaR_\eta(R_p(w))\\
&\varDelta \CVaR_\eta(R_p(w))= \CVaR_\eta(R_p(w+\varDelta w))-\CVaR_\eta(R_p(w)).
\end{align*}
Since we have
\begin{align*}
&\varDelta \VaR_\eta(R_p(w))\approx\sum_{n=1}^N \frac{\partial \VaR_\eta (R_p(w))}{\partial w_n}  \varDelta w_n,
\\
&\varDelta \CVaR_\eta(R_p(w))\approx\sum_{n=1}^N \frac{\partial \CVaR_\eta (R_p(w))}{\partial w_n}  \varDelta w_n.
\end{align*}
and 
\[
\varDelta E[R_P(w)]=\mu^\tr\varDelta w,
\]
we can find {\nop the} optimal portfolio on the local domain $D$ with respect to $\VaR$ and $\CVaR$, respectively, as follows:
\begin{align}\label{eq:OptPortVaR}
&\varDelta w^*=\argmin_{\varDelta w} \sum_{n=1}^N \frac{\partial \VaR_\eta (R_p(w)) }{\partial w_n} \varDelta w_n \\
\nonumber
&\text{subject to }  \mu^\tr \varDelta w\ge 0
\text{ and } \sum_{n=1}^N \varDelta w_n = 0.
\end{align}
and 
\begin{align}
\label{eq:OptPortAVaR}
&\varDelta w^*=\argmin_{\varDelta w} \sum_{n=1}^N \frac{\partial \CVaR_\eta (R_p(w))}{\partial w_n}  \varDelta w_n \\
\nonumber
&\text{subject to }  \mu^\tr \varDelta w\ge 0
\text{ and } \sum_{n=1}^N \varDelta w_n = 0.
\end{align}

\begin{table}
\centering
\begin{tabular}{crrr}
\hline
& Initial historical risk & Gaussian Model & NTS Model \\
 \hline
 $\VaR_{1\%}(R_P(w))$ & $2.6077\%$ & $2.5891\%$ & $2.5836\%$ \\
 $\varDelta\VaR_{1\%}(R_P(w))$ & & $-0.0186\%$ & $-0.0244\%$ \\
 \hline
 $\CVaR_{1\%}(R_P(w))$ &
	$3.3109\%$ & $3.3082\%$ & $3.3074\%$\\
 $\varDelta\CVaR_{1\%}(R_P(w))$ & & $-0.0027\%$ & $-0.0035\%$\\
\hline
\end{tabular}
\caption{\label{table:RiskBudgeting}Risk Budgeting Result. VaR and CVaR in this table are historical VaR and historical CVaR. }
\end{table}

We perform the risk budgeting for VaR and CVaR using the 30 stocks in Table \ref{table:DJIA Members} with the estimated parameters in Table \ref{table:ParamEst}.
Let the local domain be
\[
D=\{(x_1,x_2,\cdots,x_{30})\,|\, x_j\in[-3\cdot 10^{-3},3\cdot 10^{-3}],\, j=1,2,\cdots, 30 \},
\]
and the initial portfolio $w_0$ be the equally weighted portfolio. 
For risk budgeting for VaR under the NTS market model, we find $\varDelta w^*$ by \eqref{eq:OptPortVaR} with \eqref{eq:MTC VaR}.
For risk budgeting for CVaR, we find $\varDelta w^*$ by \eqref{eq:OptPortAVaR} with \eqref{eq:MCT CVaR}.
After that we put $w_{NTS}^{new}=w_0+\varDelta w^*$. 
We perform the risk budgeting under the Gaussian market model too for the benchmark. For risk budgeting for VaR, we find $\varDelta w^*$ by \eqref{eq:OptPortVaR} with \eqref{eq:MCT-VaR Gaussian}.
For risk budgeting for CVaR, we find $\varDelta w^*$ by \eqref{eq:OptPortAVaR} with \eqref{eq:MCT CVaR} under the NTS market model.
After that we put $w_{Gauss}^{new}=w_0+\varDelta w^*$.

Table \ref{table:RiskBudgeting} provides the results of this optimization. To compare the performance of the portfolio, we show the historical VaR and CVaR at 1\% significant level. Since the historical VaR and CVaR are model free, we fairly compare the risk budgeting performance of the Gaussian market model to that of the NTS market model. In VaR and CVaR, both Gaussian model and NTS model have negative increment after risk budgeting, but the $\varDelta\VaR$ and $\varDelta\CVaR$ in NTS model are smaller than those values in Gaussian model, respectively. Hence, the risk budgeting of the NTS model performs better than that of the Gaussian model.

Next, we perform the risk budgeting for VaR and CVaR, iteratively as the following algorithm:
\begin{enumerate}
\item[] \textbf{Step 1.} Consider the equally weighted portfolio $w$ as the initial portfolio.
\item[] \textbf{Step 2.} Calculate MCT-VaR or MCT-CVaR for $w$ 
\begin{itemize}
\item In risk budgeting for VaR, we use \eqref{eq:MTC VaR} for the NTS market model or use \eqref{eq:MCT-VaR Gaussian} for the Gaussian Model.
\item In risk budgeting for CVaR, we use \eqref{eq:MCT CVaR} for the NTS market model or use \eqref{eq:MCT-CVaR Gaussian} for the Gaussian Model.
\end{itemize}
\item[] \textbf{Step 3.} Perform risk budgeting and find $\varDelta w^*$
\begin{itemize}
\item In risk budgeting for VaR, we use \eqref{eq:OptPortVaR}.
\item In risk budgeting for CVaR, we use \eqref{eq:OptPortAVaR}.
\end{itemize}
\item[] \textbf{Step 4.} Change $w$ to $w+\varDelta w^*$ and go to Step 2. Repeat [Step 2 - Step 4] $M$ times.
\end{enumerate}  
Using the 30 stocks in Table \ref{table:DJIA Members} with the estimated parameters in Table \ref{table:ParamEst}, we perform the iterative risk budgeting $M=50$ times for 
\[
D=\{(x_1,x_2,\cdots,x_{30})| x_j\in\left[-2.5\cdot 10^{-4},2.5\cdot 10^{-4}\right], j=1,2,\cdots, 30 \}.
\]
The results are exhibited in Figure \ref{fig:RiskBudgetingIteration}. For each iteration, we calculate historical VaR, and CVaR. Those values are drawn in the first and the second plates, respectively. The figure shows that 
\begin{itemize}
\item VaR decreases in both the NTS Market model and the Gaussian market model, but the decreasing speed of the NTS model is faster than that of the Gaussian market model.
\item CVaR decreases in both the NTS Market model and the Gaussian market model, but the decreasing speed of the NTS model is faster than that of the Gaussian market model.
\end{itemize} 

Consequently, using risk budgeting of the NTS market model, we obtain the portfolio better performed than the portfolio obtained by the risk budgeting of the Gaussian market model. 
\begin{figure}
\centering
\includegraphics[width=13cm]{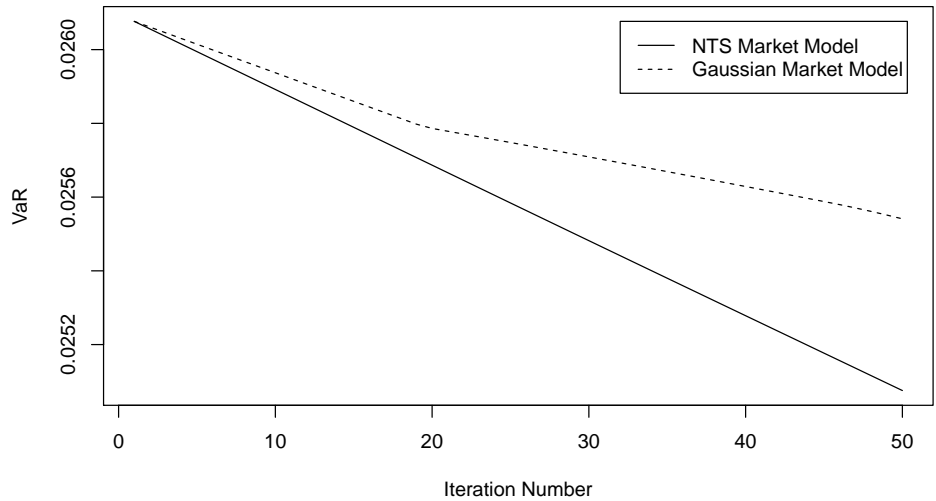}
\includegraphics[width=13cm]{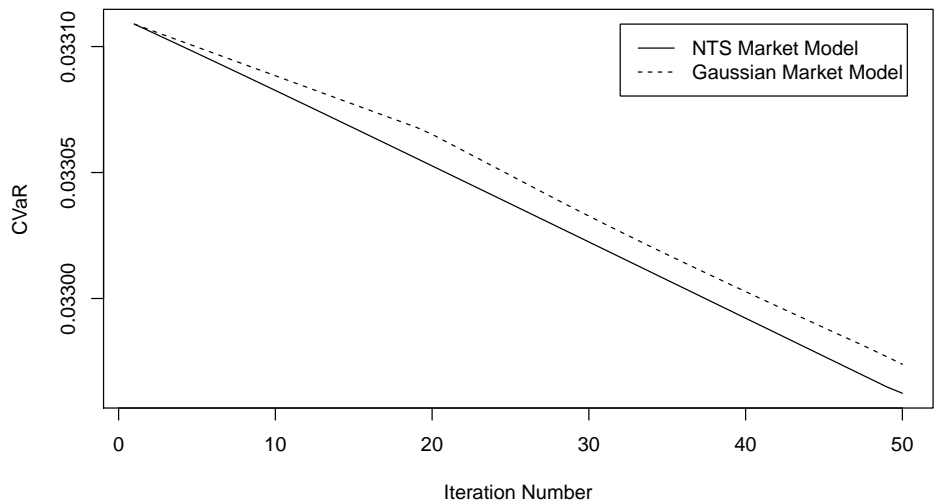}
\caption{\label{fig:RiskBudgetingIteration}Performance of risk budgeting. 
The first plate exhibit the VaR values for each iteration and the second plate exhibit CVaR values for each number of iteration. The solid lines are for the NTS market model and the dashed lines are for the Gaussian market model.
}
\end{figure}
\section{Conclusion}
In this paper, we discuss a generalized portfolio optimization considering not only mean and variance (dispersion) but also asymmetry. Using the NTS market model, we can decompose the portfolio risk to the dispersion risk and asymmetric risk, and obtain the generalized portfolio optimization method.
The classical efficient frontier for mean-variance optimization is extended to a surface on the three-dimensional space by the NTS market model. The Sharpe ratio and tangency portfolio can be extended to the AS ratio and the AS ratio maximization portfolio. The marginal contribution to risk on the NTS model is also given by analytically tractable forms for VaR and CVaR. 

Empirical tests are provided in the paper as well. The empirical surface of the efficient frontier on the NTS market model is presented for 30 major U.S. stock data, and AS ratio maximization strategy is discussed. Portfolio optimization methods by risk budgeting with the marginal contribution to VaR and CVaR are also performed on the NTS market model and the Gaussian market model. Using those empirical tests, we can see that the new portfolio optimization is more flexible and realistic than the traditional mean-variance method on the Gaussian model, and the risk budgeting on the NTS market model increases the performance of the portfolio faster than the risk budgeting on the Gaussian market model.

\appendix
\section{Appendix : Multivariate Normal Tempered Stable Distribution}
Let $\alpha\in(0,2)$ and $\theta>0$, and let $\Tau$ be a positive random variable whose characteristic function $\phi_{\Tau}$ is equal to
\begin{equation}\label{eq:ChF.TSsubordProcess}
\phi_{\Tau}(u) =
\exp\left(-\frac{2\theta^{1-\frac{\alpha}{2}}}{\alpha}\left((\theta-iu)^{\frac{\alpha}{2}}-\theta^{\frac{\alpha}{2}}\right)\right).
\end{equation}
The random variable $\Tau$ is referred to as \emph{Tempered Stable Subordinator}.
Let $X=(X_1, X_2, \cdots, X_N)^{\tr}$ be a multivariate random variable given by
\[
X = \mu + \beta(\Tau-1) + \textup{diag}(\gamma) \varepsilon \sqrt{\Tau} ,
\]
where 
\begin{itemize}
\item $\mu = (\mu_1, \mu_2, \cdots, \mu_N)^{\tr}\in\R^N$
\item $\beta = (\beta_1, \beta_2, \cdots, \beta_N)^{\tr}\in\R^N$
\item $\gamma = (\gamma_1, \gamma_2, \cdots, \gamma_N)^{\tr}\in\R_+^N$ with $\R_+=[0,\infty)$
\item $\varepsilon = (\varepsilon_1, \varepsilon_2, \cdots, \varepsilon_N)^{\tr}$ is a $N$-dimensional standard normal distribution with a covariance matrix $\Sigma$. That is, $\varepsilon_n\sim \Phi(0,1)$ for $n\in\{1,2,\cdots, N\}$ and $(k,l)$-th element of $\Sigma$ is given by $\rho_{k,l}=\cov(\varepsilon_k,\varepsilon_l)$ for $k,l\in\{1,2,\cdots,N\}$.
\item $\Tau$ is the Tempered Stable Subordinator with parameters $(\alpha,\theta)$, and is independent of $\varepsilon_n$ for all $n=1,2,\cdots, N$.
\end{itemize}
Then $X$ is referred to as the \emph{$N$-dimensional NTS random variable} with parameters $(\alpha$, $\theta$, $\beta$, $\gamma$, $\mu$, $\Sigma)$ which we denote by $X\sim \textup{NTS}_N(\alpha$, $\theta$, $\beta$, $\gamma$, $\mu$, $\Sigma)$.
The NTS distribution has the following properties:
\begin{enumerate}
\item The mean of $X$ are equal to $E[X] = \mu$.
\item The covariance between $X_k$ and $X_l$ is given by
\begin{equation}\label{eq:CovarianceAndRho}
\cov(X_k,X_l)=\rho_{k,l}\gamma_k\gamma_l+\beta_k\beta_l\left(\frac{2-\alpha}{2\theta}\right)
\end{equation}
for $k,l\in\{1,2,\cdots,N\}$. 
\item The variance of $X_n$ is 
\[
\var(X_n)=\gamma_n^2+\beta_n^2\left(\frac{2-\alpha}{2\theta}\right) \text{ for } n\in\{1,2,\cdots, N\}.
\]
\item Characteristic function of $X_n$ is
\[
\phi_{X_n}(u) = \exp\left((\mu-\beta)ui-\frac{2\theta^{1-\frac{\alpha}{2}}}{\alpha}
\left(\left(\theta-i\beta u+\frac{\gamma^2u^2}{2}\right)^{\frac{\alpha}{2}}-\theta^{\frac{\alpha}{2}}\right)\right) 
\]
\end{enumerate}

Providing $\mu_n=0$ and
$\gamma_n = \sqrt{1-\beta_n^2 \left(\frac{2-\alpha}{2\theta}\right)}$ with $|\beta_n|<\sqrt{ \frac{2\theta}{2-\alpha}}$ for $n$ $\in$ $\{ 1$,$2$, $\cdots$,$N\}$,
the $N$-dimensional NTS random variable $X$ has $E[X] = (0,0,\cdots,0)^{\tr}$ and $\var(X)$ $=$ $(1$,$1$,$\cdots$,$1)^{\tr}$. In this case, $X$ is referred to as the \emph{$N$-dimensional standard NTS random variable} with parameters $(\alpha$, $\theta$, $\beta$, $\Sigma)$ and we denote it by $X\sim \textup{stdNTS}_N(\alpha$, $\theta$, $\beta$, $\Sigma)$. 

For one dimensional NTS distribution, $\Sigma = 1$, we can prove the following Lemma which is changing parameterization.
\begin{lemma}\label{Lemma Scale NTS} Let $X\sim NTS_1(\alpha$, $\theta$, $\beta$, $\gamma$, $\mu$, $1)$ and $\xi\sim \textup{stdNTS}_1(\bar\alpha$, $\bar\theta$, $\bar\beta, 1)$.
Suppose 
$\bar\alpha = \alpha$, $\bar\theta = \theta$, and $
\bar\beta = \beta/\sigma$,
where $\sigma=\sqrt{\gamma^2+\beta^2\left(\frac{2-\alpha}{2\theta}\right)}$.
Then we have $X = \mu + \sigma \xi$.
\end{lemma}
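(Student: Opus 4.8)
The plan is to identify the two laws through their characteristic functions. Specializing the characteristic-function formula for $\NTS_N$ given in the Appendix to the one-dimensional case $N=1$, $\Sigma=1$, a random variable $X\sim \NTS_1(\alpha,\theta,\beta,\gamma,\mu,1)$ has
\[
\phi_X(u)=\exp\left((\mu-\beta)ui-\frac{2\theta^{1-\frac{\alpha}{2}}}{\alpha}\left(\left(\theta-i\beta u+\frac{\gamma^2u^2}{2}\right)^{\frac{\alpha}{2}}-\theta^{\frac{\alpha}{2}}\right)\right),
\]
while $\xi\sim\stdNTS_1(\alpha,\theta,\bar\beta,1)$ is the special case with zero location and $\bar\gamma=\sqrt{1-\bar\beta^2\left(\frac{2-\alpha}{2\theta}\right)}$, so that
\[
\phi_\xi(v)=\exp\left(-\bar\beta vi-\frac{2\theta^{1-\frac{\alpha}{2}}}{\alpha}\left(\left(\theta-i\bar\beta v+\frac{\bar\gamma^2v^2}{2}\right)^{\frac{\alpha}{2}}-\theta^{\frac{\alpha}{2}}\right)\right).
\]
Throughout, $z\mapsto z^{\alpha/2}$ is the principal branch, which is unambiguous since the base $\theta-i\beta u+\gamma^2u^2/2$ (and its barred counterpart) has strictly positive real part for every real argument.

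First I would form $\phi_{\mu+\sigma\xi}(u)=e^{i\mu u}\,\phi_\xi(\sigma u)$ by substituting $v=\sigma u$. Next I would verify the two algebraic identities that force the exponents to agree, both consequences of $\sigma^2=\gamma^2+\beta^2\left(\frac{2-\alpha}{2\theta}\right)$ and $\bar\beta=\beta/\sigma$: first $\sigma\bar\beta=\beta$, which turns the linear part $i\mu u-\bar\beta\sigma ui$ into $(\mu-\beta)ui$; and second
\[
\sigma^2\bar\gamma^2=\sigma^2-\sigma^2\bar\beta^2\left(\frac{2-\alpha}{2\theta}\right)=\sigma^2-\beta^2\left(\frac{2-\alpha}{2\theta}\right)=\gamma^2,
\]
which turns $\theta-i\bar\beta\sigma u+\frac{\bar\gamma^2\sigma^2u^2}{2}$ into $\theta-i\beta u+\frac{\gamma^2u^2}{2}$. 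Plugging these into $e^{i\mu u}\phi_\xi(\sigma u)$ reproduces $\phi_X(u)$ for all $u\in\R$, and the uniqueness theorem for characteristic functions yields $X\disteq\mu+\sigma\xi$. I would also note that the identity in fact holds pathwise when $\xi$ is realized from the same subordinator $\Tau$ and Gaussian vector $\varepsilon$ underlying $X$, since then $\mu+\sigma\xi=\mu+\sigma\bar\beta(\Tau-1)+\sigma\bar\gamma\varepsilon\sqrt\Tau=\mu+\beta(\Tau-1)+\gamma\varepsilon\sqrt\Tau=X$, using $\sigma\bar\beta=\beta$ and $\sigma\bar\gamma=\gamma$.

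There is essentially no obstacle here: this is a short characteristic-function computation, and the only steps requiring a line of care are the simplification of $\sigma^2\bar\gamma^2$ above and the observation that $\bar\gamma$ is well defined, i.e. $|\bar\beta|<\sqrt{2\theta/(2-\alpha)}$, which is automatic because $\sigma^2\bar\beta^2=\beta^2$ and $\gamma^2=\sigma^2\bar\gamma^2\ge0$ force $\bar\beta^2\left(\frac{2-\alpha}{2\theta}\right)\le1$. For completeness I would also remark that, by the variance formula in the Appendix, $\sigma$ is exactly the standard deviation of $X$, which is why the rescaled variable $\xi$ carries unit variance.
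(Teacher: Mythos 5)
Your proposal is correct and follows essentially the same route as the paper: both arguments match the characteristic function of $\mu+\sigma\xi$ against that of $X$, using the identities $\sigma\bar\beta=\beta$ and $\sigma^2\bar\gamma^2=\gamma^2$ (the paper merely runs the computation in the reverse direction, solving for $\bar\beta$ and $\sigma$ from the requirement that the exponents agree). Your added pathwise observation and the check that $\bar\gamma$ is well defined are harmless extras, not departures from the paper's argument.
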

\begin{proof}
The Ch.F of $X$ is given by
\begin{align}
\phi_{X}(u)&=\exp\left((\mu-\beta)iu-\frac{2\theta^{1-\frac{\alpha}{2}}}{\alpha}
\left(\left(\theta-i\beta u+\frac{\gamma^2 u^2}{2}\right)^{\frac{\alpha}{2}}-\theta^{\frac{\alpha}{2}}\right)\right)
\label{eq:chfXdt}
\end{align}
By the definition of stdNTS distribution, the Ch.F of $\mu+\sigma\xi$ is equal to
\begin{align}
\nonumber
\phi_{\mu+\sigma\xi}(u) &= E[e^{(\mu+\sigma\xi)ui}]=e^{\mu ui}E[e^{iu\sigma\xi}] \\
\label{eq:chf mu sigma xi}
&=\exp\left((\mu-\bar\beta \sigma) iu-\frac{2\bar\theta^{1-\frac{\bar\alpha}{2}}}{\bar\alpha}
\left(\left(\bar\theta-i\bar\beta \sigma u+\left(1-\bar\beta^2\left(\frac{2-\bar\alpha}{2\bar\theta}\right)\right)\frac{\sigma^2 u^2}{2}\right)^{\frac{\bar\alpha}{2}}-\bar\theta^{\frac{\bar\alpha}{2}}\right)\right).
\end{align}
Hence \eqref{eq:chfXdt}=\eqref{eq:chf mu sigma xi} if $\bar\alpha = \alpha$, $\bar\theta = \theta$, $\bar\beta \sigma = \beta$,  and $\gamma^2 =\sigma^2\left(1-\bar\beta^2\left(\frac{2-\alpha}{2\theta}\right)\right)$. Since $\sigma = \beta/\bar\beta$, we have
\begin{align*}
\gamma^2 = \frac{\beta^2}{\bar\beta^2}\left(1-\bar\beta^2\left(\frac{2-\alpha}{2\theta}\right)\right),
\end{align*}
or
\begin{align*}
\gamma^2\bar\beta^2 = \beta^2\left(1-\bar\beta^2\left(\frac{2-\alpha}{2\theta}\right)\right)= \beta^2-\beta^2\bar\beta^2\left(\frac{2-\alpha}{2\theta}\right),
\end{align*}
and hence
\begin{align*}
\bar\beta^2\left(\gamma^2+\beta^2\left(\frac{2-\alpha}{2\theta}\right)\right)=\beta^2.
\end{align*}
Therefore, we have
\begin{equation*}
\bar\beta = \frac{\beta}{\sigma},
\end{equation*}
where 
\[
\sigma = \sqrt{\gamma^2+\beta^2\left(\frac{2-\alpha}{2\theta}\right)}.
\]
\end{proof}

The linear combination of NTS member variables of the NTS vector is again NTS distributed as the following proposition.
\begin{lemma} \label{pro:LinCombMNTS}
Let $w = (w_1, w_2, \cdots, w_N)^\tr\in\R^N$ and $X\sim \textup{NTS}_N(\alpha$, $\theta$, $\beta$, $\gamma$, $\mu$, $\Sigma)$. Then 
$w^{\tr} X \sim \textup{NTS}_1(\alpha,\theta,\bar\beta,\bar\gamma,\bar\mu,1)$,
where
\begin{align*}
\bar\mu=w^{\tr} \mu, ~~~\bar\beta=w^{\tr} \beta~~~
\text{ and } ~~~
\bar\gamma=\sqrt{w^\tr \diag(\gamma)\Sigma\diag(\gamma) w }.
\end{align*}
\end{lemma}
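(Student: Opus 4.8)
The plan is to bypass the marginal characteristic-function formula in item~(4) of the NTS properties and work directly from the subordinated-Gaussian representation. Writing $X = \mu + \beta(\Tau - 1) + \diag(\gamma)\varepsilon\sqrt{\Tau}$ and pairing with $w$ gives
\[
w^\tr X = w^\tr\mu + (w^\tr\beta)(\Tau - 1) + (w^\tr\diag(\gamma)\varepsilon)\sqrt{\Tau}.
\]
The key observation is that $w^\tr\diag(\gamma)\varepsilon$ is a linear functional of the Gaussian vector $\varepsilon\sim\Phi(0,\Sigma)$, hence a univariate centered Gaussian with variance $w^\tr\diag(\gamma)\Sigma\diag(\gamma)w = \bar\gamma^2$, and since $\varepsilon$ is independent of $\Tau$, so is this scalar. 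I would then note that $(w^\tr\diag(\gamma)\varepsilon,\Tau) \disteq (\bar\gamma\,\tilde\varepsilon,\Tau)$ for a standard normal $\tilde\varepsilon$ independent of $\Tau$, and apply the continuous map $(a,t)\mapsto \bar\mu + \bar\beta(t-1) + a\sqrt{t}$ to both sides, yielding $w^\tr X \disteq \bar\mu + \bar\beta(\Tau-1) + \bar\gamma\,\tilde\varepsilon\sqrt{\Tau}$, which is precisely the defining representation of an $\textup{NTS}_1(\alpha,\theta,\bar\beta,\bar\gamma,\bar\mu,1)$ variable.

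As a cross-check, or an alternative self-contained route, I would compute the characteristic function of $w^\tr X$ at $v\in\R$ by conditioning on $\Tau$: given $\Tau$, the quantity $w^\tr X$ is Gaussian with mean $\bar\mu + \bar\beta(\Tau-1)$ and variance $\bar\gamma^2\Tau$, so
\[
\phi_{w^\tr X}(v) = e^{(\bar\mu-\bar\beta)iv}\,E\!\left[\exp\left(\Tau\left(i\bar\beta v - \tfrac12\bar\gamma^2 v^2\right)\right)\right] = e^{(\bar\mu-\bar\beta)iv}\,\phi_\Tau\!\left(\bar\beta v + \tfrac{i}{2}\bar\gamma^2 v^2\right),
\]
and substituting \eqref{eq:ChF.TSsubordProcess} reproduces the $N=1$ instance of the NTS characteristic-function formula with parameters $(\bar\mu,\bar\beta,\bar\gamma)$. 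Uniqueness of characteristic functions then closes the argument. This second route also doubles as the proof of Proposition~\ref{pro:mu+sigma stdNTS} when specialized, so presenting it may be economical.

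The only points requiring a little care, neither of them a genuine obstacle, are: verifying that the scalar $w^\tr\diag(\gamma)\varepsilon$ stays independent of $\Tau$ after the reduction, which is immediate since it is a deterministic function of $\varepsilon$ alone; and, in the characteristic-function route, checking that the exponential moment $E[\exp(\Tau z)]$ with $z = i\bar\beta v - \tfrac12\bar\gamma^2 v^2$ is finite so that evaluation of $\phi_\Tau$ at the complex argument is legitimate — this holds because $\re z \le 0$. Note that no constraint of the type $|\beta_n| < \sqrt{2\theta/(2-\alpha)}$ is needed here, since the lemma is stated for the general (non-standardized) NTS family; the result is essentially a bookkeeping consequence of the normal-mean-variance-mixture structure.
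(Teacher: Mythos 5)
Your primary route is exactly the paper's proof: pair $w$ with the subordinated-Gaussian representation, observe that $w^\tr\diag(\gamma)\varepsilon \disteq \bar\gamma\,\epsilon_0$ with $\epsilon_0\sim\Phi(0,1)$ independent of $\Tau$, and read off the $\textup{NTS}_1$ representation (the paper states this and declares the rest trivial; your joint-law/continuous-map justification and the characteristic-function cross-check simply make that step explicit). Correct, same approach.
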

\begin{proof}
Since we have
\begin{align*}
w^\tr X = w^\tr \mu + w^\tr \beta (\Tau-1) + w^\tr \diag(\gamma)\epsilon \sqrt{\Tau},
\end{align*}
and $w^\tr \diag(\gamma)\epsilon \disteq \sqrt{w^\tr \diag(\gamma)\Sigma\diag(\gamma) w } \epsilon_0$ with $\epsilon_0\sim \Phi(0,1)$, it is trivial.
\end{proof}
Finally we can provide the proof of Proposition \ref{pro:mu+sigma stdNTS}.
\subsection{Proof of Proposition \ref{pro:mu+sigma stdNTS}}
\begin{proof}[Proof of Proposition \ref{pro:mu+sigma stdNTS}]
By \eqref{eq:NTS Market} and \eqref{eq:NTS Portfolio Return}, we have
\begin{align*}
R &= \mu + \diag(\sigma)(\beta(\Tau-1)+\diag(\gamma)\epsilon \sqrt{\Tau})\\
&= \mu + \diag(\sigma)\beta(\Tau-1)+\diag(\sigma)\diag(\gamma)\epsilon \sqrt{\Tau}
\\&
\sim \textup{NTS}_N (\alpha, \theta, \diag(\sigma) \beta, \diag(\sigma) \gamma, \mu, \Sigma ),
\end{align*}
where $\gamma=(\gamma_1, \gamma_2, \cdots, \gamma_N)^{\tr}$ with $\gamma_n = \sqrt{1-\beta_n^2 \left(\frac{2-\alpha}{2\theta}\right)}$ and $\Tau$ is the tempered stable subordinator with parameter $(\alpha,\theta)$.
By Lemma \ref{pro:LinCombMNTS}, 
\[
R_P(w) = w^\tr R \sim \NTS_1(\alpha, \theta, w^\tr\diag(\sigma)\beta, \sqrt{w^\tr \diag(\sigma)\diag(\gamma) \Sigma \diag(\gamma)\diag(\sigma)w}, w^\tr\mu, 1).
\]
By Lemma \ref{Lemma Scale NTS}, we have
\[
R_P(w) = w^\tr\mu + \bar\sigma(w) \Xi
\]
where
\[
\bar\sigma(w) = \sqrt{w^\tr \diag(\sigma)\diag(\gamma) \Sigma \diag(\gamma)\diag(\sigma)w+(w^\tr\diag(\sigma)\beta)^2\left(\frac{2-\alpha}{2\theta}\right)}
\]
and
\[
\Xi\sim\stdNTS_1(\alpha, \theta, w^\tr\diag(\sigma)\beta/\bar\sigma(w), 1).
\]
Also, by \eqref{eq:CovarianceAndRho}, we have
\begin{align*}
\left(\bar\sigma(w)\right)^2 &= \sum_{k=1}^N\sum_{l=1}^N w_k w_l \gamma_k\gamma_l\sigma_k\sigma_l \rho_{k,l} + \left(\sum_{k=1}^N w_k\sigma_k\beta_k\right)^2\left(\frac{2-\alpha}{2\theta}\right)\\
&= \sum_{k=1}^N\sum_{l=1}^N w_k w_l \sigma_k\sigma_l \left(\gamma_k\gamma_l\rho_{k,l}+\beta_k\beta_l\left(\frac{2-\alpha}{2\theta}\right)\right)
\\
&
= \sum_{k=1}^N\sum_{l=1}^N w_k w_l \sigma_k\sigma_l\cov(X_k,X_l).
\end{align*}
Hence, we have
\[
\left(\bar\sigma(w)\right)^2 = \sum_{k=1}^N\sum_{l=1}^N w_k w_l \cov(R_k, R_l) = w^\tr \Sigma_R w
\]
where $\Sigma_R$ is the covariance matrix of $R$.
\end{proof}
\subsection{Proof of Proposition \ref{prop:MCT VaR and CVaR}}
\begin{proof}[Proof of Proposition \ref{prop:MCT VaR and CVaR}]
By \eqref{eq:NTS Portfolio Return}, we have
\begin{equation*} 
R_P(w) = \bar\mu(w) + \bar\sigma(w) \Xi ~~~\text{ for }~~~ \Xi\sim \textup{stdNTS}_1(\alpha, \theta, \bar\beta(w), 1),
\end{equation*}
where 
\begin{align*}
\bar\mu(w) &= w^\tr \mu = \sum_{k=1}^N w_k\mu_k, \\
\bar\beta(w) &= \frac{w^\tr\diag(\sigma)\beta}{\bar\sigma(w)}= \frac{\sum_{k=1}^N w_k\sigma_k\beta_k}{\bar\sigma(w)},\\
\bar\sigma(w)&=\sqrt{w^\tr \Sigma_R w}=\sqrt{\sum_{k=1}^N\sum_{l=1}^N w_k w_l \cov(R_k, R_l)}.
\end{align*}
Hence, the first derivative of $\bar\beta(w)$ and $\bar\sigma(w)$ are obtained as follows:
\begin{align}
\nonumber
\frac{\partial \bar\beta(w)}{\partial w_n} &= \frac{\partial}{\partial w_n}
\left(\frac{\sum_{k=1}^N w_k \sigma_k \beta_k}{\bar\sigma(w)}\right)
\\
\nonumber
&=\frac{\sigma_n\beta_n\bar\sigma(w) - \left(\sum_{k=1}^N w_n \sigma_k \beta_k\right)\frac{\partial \bar\sigma(w)}{\partial w_n}}{\left(\bar\sigma(w)\right)^2}
\\
&=\frac{\sigma_n\beta_n}{\bar\sigma(w)} - \frac{\bar\beta(w)}{\bar\sigma(w)} \frac{\partial \bar\sigma(w)}{\partial w_n}
\label{eq:d beta dw}
\end{align}
and
\begin{align*}
\frac{\partial \bar\sigma(w)}{\partial w_n} & = \frac{1}{2\bar\sigma(w)}
\frac{\partial }{\partial w_n}\sum_{k=1}^N\sum_{l=1}^N w_k w_l \cov(R_k, R_l)
\\
& 
= \frac{\sum_{k=1}^Nw_k\cov(R_n, R_k)}{\bar\sigma(w)}.
\end{align*}
Since we have
\[
\VaR_\eta(R_p(w)) = -\bar\mu(w) - \bar\sigma(w) F_\stdNTS^{-1}(\eta, \alpha, \theta, \bar\beta(w))
\]
we obtain
\begin{align}
\nonumber
&\frac{\partial \VaR_\eta(R_p(w))}{\partial w_n} 
\\
\nonumber
&= -\frac{\partial \bar\mu(w)}{\partial w_n} - \frac{\partial \bar\sigma(w)}{\partial w_n} F_\stdNTS^{-1}(\eta, \alpha, \theta, \bar\beta(w)) - \bar\sigma(w)\frac{\partial F_\stdNTS^{-1}(\eta, \alpha, \theta, \bar\beta(w))}{\partial w_n}
\\&
=-\mu_n - \frac{\partial \bar\sigma(w)}{\partial w_n}F_\stdNTS^{-1}(\eta, \alpha, \theta, \bar\beta(w))
 + \bar\sigma(w)\frac{\partial F_\stdNTS^{-1}(\eta, \alpha, \theta, \beta)}{\partial \beta}\Big|_{\beta = \bar\beta(w)}\frac{\partial \bar\beta(w)}{\partial w_n} \label{eq:d Var Xi}
\end{align}
By substituting \eqref{eq:d beta dw} into \eqref{eq:d Var Xi}, we obtain \eqref{eq:MTC VaR}.

Since there is $\delta>0$ such that $|\phi_\Xi(-u+i\delta)|<\infty$ for all $u\in\R$, we have 
\begin{align*}
\frac{\partial \CVaR_\eta(\Xi)}{\partial \beta}\Big|_{\beta=\bar\beta(w)}=\frac{\partial }{\partial \beta}\CVaR_\stdNTS(\eta, \alpha, \theta, \beta)\Big|_{\beta=\bar\beta(w)}.
\end{align*}
By \eqref{eq:CVaR stdNTS}, we have
\begin{align*}
&
\frac{\partial }{\partial \beta}\CVaR_\stdNTS(\eta, \alpha, \theta, \beta)
\\
&
=-\frac{\partial}{\partial \beta}F_\stdNTS^{-1}(u,\alpha,\theta, \beta)
\\
&
-\frac{1}{\pi\eta}\re\int_0^\infty e^{(iu+\delta)F_\stdNTS^{-1}(u,\alpha,\theta, \beta)}\frac{\phi_\stdNTS(-u+i\delta, \alpha, \theta, \beta)}{(-u+i\delta)^2}\\
&\times\left((\delta+iu)\frac{\partial }{\partial \beta}F_\stdNTS^{-1}(u,\alpha,\theta, \beta) +\frac{\partial}{\partial \beta}\log \phi_\stdNTS(-u+i\delta, \alpha, \theta, \beta)\right)du.
\end{align*}
By setting $\psi(z, \alpha, \theta, \beta)=\frac{\partial}{\partial \beta}\log\phi_\stdNTS(z, \alpha, \theta, \beta)$, we can simplify
\begin{align*}
\frac{\partial }{\partial \beta}\CVaR_\stdNTS(\eta, \alpha, \theta, \beta)
&
=-\frac{\partial}{\partial \beta}F_\stdNTS^{-1}(u,\alpha,\theta, \beta)
\\
&
-\frac{1}{\pi\eta}\re\int_0^\infty e^{(iu+\delta)F_\stdNTS^{-1}(u,\alpha,\theta, \beta)}\frac{\phi_\stdNTS(-u+i\delta, \alpha, \theta, \beta)}{(-u+i\delta)^2}\\
&\times\left((\delta+iu)\frac{\partial }{\partial \beta}F_\stdNTS^{-1}(u,\alpha,\theta, \beta) +\psi(-u+i\delta, \alpha, \theta, \beta)\right)du.
\end{align*}
The characteristic function  $\phi_\stdNTS(u, \alpha, \theta, \beta)$ is equal to
\begin{align*}
\nonumber
&\phi_\stdNTS(u, \alpha, \theta, \beta) \\
&=\exp\left(-\beta ui-\frac{2\theta^{1-\frac{\alpha}{2}}}{\alpha}
\left(\left(\theta-i\beta u+\left(1-\frac{\beta^2(2-\alpha)}{2\theta}\right)\frac{u^2}{2}\right)^{\frac{\alpha}{2}}-\theta^{\frac{\alpha}{2}}\right)\right),
\end{align*}
hence we have
\begin{align*}
&\psi(z, \alpha, \theta, \beta)
\\&
=\frac{\partial}{\partial \beta}
\left(-\beta zi-\frac{2\theta^{1-\frac{\alpha}{2}}}{\alpha}
\left(\left(\theta-i\beta z+\left(1-\frac{\beta^2(2-\alpha)}{2\theta}\right)\frac{z^2}{2}\right)^{\frac{\alpha}{2}}-\theta^{\frac{\alpha}{2}}\right)\right)
\\
&
=-zi
	+\left(
		1-\frac{iz\beta}{\theta}
			+\left(
				1-\frac{\beta^2(2-\alpha)}{2\theta}
			\right)\frac{z^2}{2\theta}
	\right)^{\frac{\alpha}{2}-1}
	\left(
		zi
		+\frac{\beta(2-\alpha)}{2\theta}z^2
	\right).
\end{align*}
As VaR case, CVaR for $R_P(w)$ is calculated using $\CVaR\eta(\Xi)$ that
\begin{align*}
\CVaR_\eta(R_p(w)) = -\bar\mu(w) +\bar\sigma(w) \CVaR_\eta(\Xi).
\end{align*}
Therefore, we have
\begin{align}
\nonumber &
\frac{\partial \CVaR_\eta(R_p(w))}{\partial w_n} 
\\
\nonumber 
&= -\mu_n + \frac{\partial \bar\sigma(w)}{\partial w_n} \CVaR_\eta(\Xi) + \bar\sigma(w)\frac{\partial\CVaR_\eta(\Xi)}{\partial w_n} 
\\
\nonumber 
&= -\mu_n + \frac{\partial \bar\sigma(w)}{\partial w_n} \CVaR_\eta(\Xi) + \bar\sigma(w)\frac{\partial\CVaR_\eta(\Xi)}{\partial \beta}\Big|_{\beta=\bar\beta(w)}\frac{\partial \bar\beta(w)}{\partial w_n}\\
&= -\mu_n + \frac{\partial \bar\sigma(w)}{\partial w_n} \CVaR_\eta(\Xi) + \left(\sigma_n\beta_n - \bar\beta(w) \frac{\partial \bar\sigma(w)}{\partial w_n}\right)\frac{\partial\CVaR_\eta(\Xi)}{\partial \beta}\Big|_{\beta = \bar\beta(w)} \label{eq:d CVaR Xi}
\end{align}
By substituting $\CVaR_\eta(\Xi)=\CVaR_\stdNTS(\eta,\alpha,\theta,\bar\beta(w))$ into \eqref{eq:d CVaR Xi}, we obtain \eqref{eq:MCT CVaR}.
\end{proof}

\singlespacing
\bibliographystyle{decsci_mod}
\bibliography{refs_aaron_EffNTS}

\begin{thebibliography}{35}
\expandafter\ifx\csname natexlab\endcsname\relax\def\natexlab#1{#1}\fi
\expandafter\ifx\csname url\endcsname\relax
  \def\url#1{{\tt #1}}\fi
\expandafter\ifx\csname urlprefix\endcsname\relax\def\urlprefix{URL }\fi

\bibitem[{Aas {\em et~al.\/}(2006)Aas, Hob{\ae}k~Haff, and Dimakos}]{Aas:2006}
Aas, K., Hob{\ae}k~Haff, I., and Dimakos, X.~K. (2006).
\newblock Risk estimation using the multivariate normal inverse {Gaussian}
  distribution.
\newblock {\em Journal of Risk\/}, {\em 8\/}(2), 39--60.

\bibitem[{Anand {\em et~al.\/}(2016)Anand, Li, Kurosaki, and
  Kim}]{Anad_et_al:2016}
Anand, A., Li, T., Kurosaki, T., and Kim, Y.~S. (2016).
\newblock Foster-{H}art optimal portfolios.
\newblock {\em Journal of Banking and Finance\/}, {\em 68\/}, 117–130.

\bibitem[{Anand {\em et~al.\/}(2017)Anand, Li, Kurosaki, and
  Kim}]{Anad_et_al:2017}
Anand, A., Li, T., Kurosaki, T., and Kim, Y.~S. (2017).
\newblock The equity risk posed by the too-big-to-fail banks: A foster-hart
  estimation.
\newblock {\em Annals of Operations Research\/}, {\em 253\/}(1), 21–41.

\bibitem[{Barndorff-Nielsen and
  Levendorskii(2001)}]{BarndorffNielsenLevendorskii:2001}
Barndorff-Nielsen, O.~E. and Levendorskii, S. (2001).
\newblock Feller processes of normal inverse {Gaussian} type.
\newblock {\em Quantitative Finance\/}, {\em 1\/}, 318 -- 331.

\bibitem[{Barndorff-Nielsen and Shephard(2001)}]{BarndorffNielsenShephard:2001}
Barndorff-Nielsen, O.~E. and Shephard, N. (2001).
\newblock Normal modified stable processes.
\newblock {\em Economics Series Working Papers from University of Oxford,
  Department of Economics\/}, {\em 72\/}.

\bibitem[{Boyarchenko and
  Levendorski\u{i}(2000)}]{Boyarchenko_Levendorskii:2000}
Boyarchenko, S.~I. and Levendorski\u{i}, S.~Z. (2000).
\newblock Option pricing for truncated {L\'evy} processes.
\newblock {\em International Journal of Theoretical and Applied Finance\/},
  {\em 3\/}, 549--552.

\bibitem[{Dahlquist {\em et~al.\/}(2017)Dahlquist, Farago, and
  T\'{e}dongap}]{Dahlquist_et_al:2017}
Dahlquist, M., Farago, A., and T\'{e}dongap, R. (2017).
\newblock Asymmetries and portfolio choice.
\newblock {\em The Review of Financial Studies\/}, {\em 30\/}(2), 667–702.

\bibitem[{Eberlein {\em et~al.\/}(2012)Eberlein, Gehrig, and
  Madan}]{Eberlein_et_al:2012}
Eberlein, E., Gehrig, T., and Madan, D. (2012).
\newblock Pricing to acceptability: with applications to valuation of one’s
  own credit risk.
\newblock {\em Journal of Risk\/}, {\em 15\/}, 91--120.

\bibitem[{Eberlein and Glau(2014)}]{EberleinGlau:2014}
Eberlein, E. and Glau, K. (2014).
\newblock Variational solutions of the pricing pides for european options in
  {L\'evy} models.
\newblock {\em Applied Mathematical Finance\/}, {\em 21\/}(5), 417--450.

\bibitem[{Eberlein and Madan(2010)}]{EberleinMadan:2010}
Eberlein, E. and Madan, D.~B. (2010).
\newblock On correlating {L\'{e}vy} processes.
\newblock {\em Journal of Risk\/}, {\em 13\/}(1), 3--16.

\bibitem[{Eberlein and {\"O}zkan(2005)}]{Eberlein_Ozkan:2005}
Eberlein, E. and {\"O}zkan (2005).
\newblock The {L\'evy} {LIBOR} model.
\newblock {\em Finance and Stochastics\/}, {\em 9\/}, 327--348.

\bibitem[{Fama(1963)}]{Fama:1963}
Fama, E. (1963).
\newblock Mandelbrot and the stable {Paretian} hypothesis.
\newblock {\em Journal of Business\/}, {\em 36\/}, 420--429.

\bibitem[{Gil-Pelaez(1951)}]{Gil-Pelaez:1951}
Gil-Pelaez, J. (1951).
\newblock Note on the inversion theorem.
\newblock {\em Biometrika\/}, {\em 38\/}(3-4).

\bibitem[{Gourieroux {\em et~al.\/}(2000)Gourieroux, Laurent, and
  Scaillet}]{GourierouxaEtAl:2000}
Gourieroux, C., Laurent, J., and Scaillet, O. (2000).
\newblock Sensitivity analysis of values at risk.
\newblock {\em Journal of Empirical Finance\/}, {\em 7\/}, 225--245.

\bibitem[{Kim and Kim(2018)}]{KimKim:2018}
Kim, S.~I. and Kim, Y.~S. (2018).
\newblock Normal tempered stable structural model.
\newblock {\em Review of Derivatives Research\/}, {\em 21\/}(1), 119--148.

\bibitem[{Kim {\em et~al.\/}(2012)Kim, Giacometti, Rachev, Fabozzi, and
  Mignacca}]{Kim_et_al:2012}
Kim, Y.~S., Giacometti, R., Rachev, S.~T., Fabozzi, F.~J., and Mignacca, D.
  (2012).
\newblock Measuring financial risk and portfolio optimization with a
  non-{Gaussian} multivariate model.
\newblock {\em Annals of Operations Research\/}, {\em 201\/}(1), 325--343.

\bibitem[{Kim {\em et~al.\/}(2015)Kim, Lee, Mittnik, and Park}]{KIM2015512}
Kim, Y.~S., Lee, J., Mittnik, S., and Park, J. (2015).
\newblock Quanto option pricing in the presence of fat tails and asymmetric
  dependence.
\newblock {\em Journal of Econometrics\/}, {\em 187\/}(2), 512 -- 520.

\bibitem[{Kim {\em et~al.\/}(2010)Kim, Rachev, , Bianchi, and
  Fabozzi}]{Kim_et_al:2009b:AVaR}
Kim, Y.~S., Rachev, S.~T., , Bianchi, M.~L., and Fabozzi, F.~J. (2010).
\newblock Computing {VaR} and {AVaR} in infinitely divisible distributions.
\newblock {\em Probability and Mathematical Statistics\/}, {\em 30\/}(2),
  223--245.

\bibitem[{King(1993)}]{King:1993}
King, A.~J. (1993).
\newblock Asymmetric risk measures and tracking models for portfolio
  optimization under uncertainty.
\newblock {\em Annals of Operations Research\/}, {\em 45\/}, 165–177.

\bibitem[{Kurosaki and Kim(2018)}]{KurosakiKim:2018}
Kurosaki, T. and Kim, Y.~S. (2018).
\newblock {F}oster-{H}art optimization for currency portfolio.
\newblock {\em Studies in Nonlinear Dynamics \& Econometrics\/}, {\em 23\/}(2),
  Published Online.

\bibitem[{Mandelbrot(1963{\natexlab{a}})}]{Mandelbrot:1963a}
Mandelbrot, B.~B. (1963{\natexlab{a}}).
\newblock New methods in statistical economics.
\newblock {\em Journal of Political Economy\/}, {\em 71\/}, 421--440.

\bibitem[{Mandelbrot(1963{\natexlab{b}})}]{Mandelbrot:1963b}
Mandelbrot, B.~B. (1963{\natexlab{b}}).
\newblock The variation of certain speculative prices.
\newblock {\em Journal of Business\/}, {\em 36\/}, 394--419.

\bibitem[{Mansini {\em et~al.\/}(2007)Mansini, Ogryczak, and
  Speranza}]{Mansini_et_al:2007}
Mansini, R., Ogryczak, W., and Speranza, M.~G. (2007).
\newblock Conditional value at risk and related linear programming models for
  portfolio optimization.
\newblock {\em Annals of Operations Research\/}, {\em 152\/}, 227--256.

\bibitem[{Markowitz(1952)}]{Markowitz:1952}
Markowitz, H. (1952).
\newblock Portfolio selection.
\newblock {\em Journal of Finance\/}, {\em 7\/}(1), 77--91.

\bibitem[{{\O}ig{\aa}rd {\em et~al.\/}(2005){\O}ig{\aa}rd, Hanssen, Hansen, and
  Godtliebsen}]{Oigaard:2005}
{\O}ig{\aa}rd, T.~A., Hanssen, A., Hansen, R.~E., and Godtliebsen, F. (2005).
\newblock {EM}-estimation and modeling of heavy-tailed processes with the
  multivariate normal inverse {Gaussian} distribution.
\newblock {\em Signal Processing\/}, {\em 85\/}, 1655--1673.

\bibitem[{Pflug(2000)}]{Pflug:2000}
Pflug, G. (2000).
\newblock Some remarks on the value-at-risk and the conditional value-at-risk.
\newblock In S.~Uryasev (Ed.), {\em Probabilistic Constrained Optimization:
  Methodology and Applications\/}, The Netherlands: Kluwer Academic Publishers.
  272--281.

\bibitem[{Rachev {\em et~al.\/}(2011)Rachev, Kim, Bianch, and
  Fabozzi}]{RachevKimBianchiFabozzi:2011a}
Rachev, S.~T., Kim, Y.~S., Bianch, M.~L., and Fabozzi, F.~J. (2011).
\newblock {\em Financial Models with {L\'{e}vy} Processes and Volatility
  Clustering\/}.
\newblock John Wiley \& Sons: N.J.

\bibitem[{Rachev and Mittnik(2000)}]{RachevMittnik:2000}
Rachev, S.~T. and Mittnik, S. (2000).
\newblock {\em Stable Paretian Models in Finance\/}.
\newblock John Wiley \& Sons: New York.

\bibitem[{Rachev {\em et~al.\/}(2007)Rachev, Stoyanov, and
  Fabozzi}]{RachevStoyanovFabozzi:2007}
Rachev, S.~T., Stoyanov, S., and Fabozzi, F.~J. (2007).
\newblock {\em Advanced Stochastic Models, Risk Assessment, and Portfolio
  Optimization: The Ideal Risk, Uncertainty, and Performance Measures\/}.
\newblock Hoboken, John Wiley \& Sons: N.J.

\bibitem[{Rockafellar and Uryasev(2000)}]{RockafellarUryasev:2000}
Rockafellar, R.~T. and Uryasev, S. (2000).
\newblock Optimization of conditional value-at-risk.
\newblock {\em Journal of Risk\/}, {\em 2\/}(3), 21--41.

\bibitem[{Rockafellar and Uryasev(2002)}]{RockafellarUryasev:2002}
Rockafellar, R.~T. and Uryasev, S. (2002).
\newblock Conditional value-at-risk for general loss distributions.
\newblock {\em Journal of Banking \& Finance\/}, {\em 26\/}, 1443--1471.

\bibitem[{Sharpe(1966)}]{Sharpe:1966}
Sharpe, W.~F. (1966).
\newblock Mutual funds performance.
\newblock {\em Journal of Business\/}, {\em 36\/}(1), 119–138.

\bibitem[{Sharpe(1994)}]{Sharpe:1994}
Sharpe, W.~F. (1994).
\newblock The sharpe ratio.
\newblock {\em Journal of Portfolio Management\/}, {\em 21\/}(1), 45–58.

\bibitem[{Shi and Kim(2015)}]{ShiKim:2015}
Shi, X. and Kim, Y.~S. (2015).
\newblock Coherent risk measure and normal mixture distributions with
  application in portfolio optimization and risk allocation.
\newblock Available at SSRN: https://ssrn.com/abstract=2548057 or
  http://dx.doi.org/10.2139/ssrn.2548057.

\bibitem[{Stoyanov {\em et~al.\/}(2013)Stoyanov, Rachev, and
  Fabozzi}]{StoyanovRachevFabozzi:2013}
Stoyanov, S.~V., Rachev, S.~T., and Fabozzi, F.~J. (2013).
\newblock Sensitivity of portfolio {VaR} and {CVaR} to portfolio return
  characteristics.
\newblock {\em Annals of Operations Research\/}, {\em 205\/}, 169–187.

\end{thebibliography}

\end{document}